\documentclass[journal,10pt]{IEEEtran}
\makeatletter
\def\endthebibliography{%
    \def\@noitemerr{\@latex@warning{Empty `thebibliography' environment}}%
    \endlist
}
\makeatother

\usepackage{cite}

\usepackage[pdftex]{graphicx}
\usepackage[caption=false,font=footnotesize]{subfig}
\usepackage{tikz}
\usetikzlibrary{arrows.meta, positioning, shapes.geometric}
\usepackage{amsmath}
\usepackage{mathtools, cuted}
\usepackage{amssymb}
\usepackage{bm}
\usepackage{mathrsfs}
\usepackage{breqn}
\usepackage{bbm}
\usepackage{booktabs}

\usepackage{pifont}
\usepackage{xcolor}
\usepackage{url}
\usepackage{lettrine}
\usepackage{lipsum}
\usepackage{siunitx}
\usepackage{soul}
\usepackage{array}
\usepackage[inline]{enumitem}
\usepackage{epsfig}
\usepackage{filecontents}

\usepackage{algpseudocode}
\usepackage{algorithm, tabularx}
\usepackage{multirow}
\newcolumntype{L}[1]{>{\raggedright\let\newline\\\arraybackslash\hspace{0pt}}m{#1}}
\newcolumntype{C}[1]{>{\centering\let\newline\\\arraybackslash\hspace{0pt}}m{#1}}
\newcolumntype{R}[1]{>{\raggedleft\let\newline\\\arraybackslash\hspace{0pt}}m{#1}}
\newlength{\maxwidth}

\makeatletter
\newcommand{\multiline}[1]{%
	\begin{tabularx}{\dimexpr\linewidth-\ALG@thistlm}[t]{@{}X@{}}
		#1
	\end{tabularx}
}
\makeatother
\algdef{SE}[SUBALG]{Indent}{EndIndent}{}{{\algorithmicend\ }}
\algtext*{Indent}
\algtext*{EndIndent}

\usepackage{amsthm}
\newtheorem{theorem}{Theorem}

\newtheorem{lemma}[theorem]{Lemma}

\theoremstyle{remark}

\newtheorem{remark}{Remark}
\DeclareMathOperator*{\argmin}{\arg\min}

\begin{document}

\title{Toward a Receiver-Induced Channel Shaping Paradigm: FRIS-Assisted Rydberg Atomic MIMO with Quadrature-Leakage-Aware Design}

	\author{Hong-Bae Jeon,~\IEEEmembership{Member,~IEEE}, Kai-Kit Wong,~\IEEEmembership{Fellow,~IEEE}, and Chan-Byoung Chae,~\IEEEmembership{Fellow,~IEEE}
\thanks{This work was supported by Hankuk University of Foreign Studies Research Fund of 2026. \textit{(Corresponding Author: Chan-Byoung Chae)}}%
\thanks{H.-B. Jeon is with the Department of Information Communications Engineering, Hankuk University of Foreign Studies, Yong-in, 17035, Korea (e-mail: hongbae08@hufs.ac.kr).}%
		\thanks{K.-K. Wong is with the Department of Electronic and Electrical Engineering, University College London, WC1E 6BT London, U.K., and also with the Yonsei Frontier Laboratory, Yonsei University, Seoul 03722, South Korea (e-mail: kai-kit.wong@ucl.ac.uk).}
\thanks{C.-B. Chae is with the School of Integrated Technology, Yonsei University, Seoul 03722, Korea (e-mail: cbchae@yonsei.ac.kr).}
}

\maketitle
\begin{abstract}
This paper investigates a fluid reconfigurable intelligent surface (FRIS)-assisted Rydberg Atomic REceiver (RARE) architecture under magnitude-only heterodyne readout. We show that, unlike conventional coherent systems, the optimal propagation environment is fundamentally governed by the receiver’s nonlinear measurement structure. In particular, under the strong-reference regime, symbol detection is limited by residual quadrature leakage after reference alignment, motivating a receiver-induced channel shaping approach rather than conventional channel-centric optimization. Based on this insight, we formulate a signal-independent leakage minimization problem that jointly optimizes the FRIS port set, finite-resolution phase shifts, and the transmit beamformer, resulting in a nonconvex mixed discrete-continuous design. To address this, we develop an alternating-optimization (AO) framework comprising: (i) a closed-form eigenvector solution for widely-linear beamforming, (ii) cross-entropy method (CEM)-based combinatorial port selection, and (iii) coordinate-descent (CD) phase refinement with guaranteed monotonic descent. Simulation results demonstrate fast convergence and consistent bit-error-rate (BER) gains across various modulation orders and receiver dimensions. Moreover, the proposed FRIS-enabled design achieves near-exhaustive performance with significantly reduced complexity and consistently outperforms conventional RIS schemes with fixed elements, highlighting the effectiveness of spatial reconfiguration in suppressing quadrature leakage and the additional spatial degree-of-freedom (DoF) enabled by FRIS for reliable atomic-MIMO detection.
\end{abstract}

\begin{IEEEkeywords}
Rydberg atomic receiver (RARE), fluid reconfigurable intelligent surface (FRIS), alternating optimization.
\end{IEEEkeywords}

\IEEEpeerreviewmaketitle
\section{Introduction}
\label{sec:intro}
\subsection{Breaking Conventional Receiver Limits: Rydberg Atomic Receiver (RARE)}
\IEEEPARstart{T}{he} continuing evolution of wireless communications is being accelerated by unprecedented demands for extremely high data rates, ultra-reliable transmission, and low-latency connectivity, driven by the proliferation of data-intensive and latency-critical services~\cite{ss6G, T6G, NewHor, smidaFD, seman}. To meet these stringent requirements, sixth-generation (6G) wireless systems are envisioned to move beyond conventional link-level enhancements and adopt fundamentally new paradigms. These paradigms enable simultaneous improvements in spectral efficiency~\cite{yhFD}, interference mitigation in ultra-dense networks~\cite{MA}, and reliable communication under high mobility and adverse propagation conditions~\cite{hjcoop, hjaib}. At the same time, these performance targets must be achieved under strict constraints on energy consumption and hardware complexity, motivating a rethinking of both receiver architectures and propagation environment design.

Among emerging receiver technologies, Rydberg Atomic REceiver (RARE)~\cite{atomicmag, atomicjsac, rarnature, RAR_Classical_Comm_Sensing} have recently attracted significant attention as a promising candidate for next-generation wireless front-ends. Unlike conventional metallic antennas that rely on electronic circuitry for signal reception, RARE exploits the quantum-mechanical response of highly excited Rydberg atoms to incident radio-frequency (RF) electromagnetic (EM) fields. Specifically, a Rydberg atom, formed by exciting an electron to a high-lying energy level, exhibits strong dipole interactions with external RF fields. Leveraging phenomena such as electromagnetically-induced-transparency (EIT) and Autler-Townes (AT) splitting, RARE converts RF field variations into measurable optical signatures, enabling reliable recovery of transmitted signals~\cite{tqe, qprobe}. 

Several demonstrations have shown that RARE can extract diverse signal attributes with exceptional receiver sensitivity~\cite{rydamp, cfatomic, mer, rarclose2}. This capability originates from its fundamentally different reception mechanism compared to conventional RF antenna. Unlike metal conductors, where thermal agitation of charge carriers imposes an intrinsic thermal-noise limit, the atom-field interaction in RARE does not generate thermal-noise~\cite{mer, rydhard3, rarclose1}. Instead, the dominant noise source is quantum shot noise (QSN) associated with probing Rydberg states, which is several orders of magnitude lower than the classical thermal-noise floor~\cite{Harnessing_RAR_Tutorial, quanmobi}, enabling reception close to the standard quantum limit (e.g., lower than $1~\mathrm{nV cm^{-1} Hz^{-1/2}}$)~\cite{qsens, resry, qs22}. As a result, the noise-equivalent power (NEP) of RARE is approximately 20-30~dB lower than that of the most sensitive RF receiver~\cite{rqs}. Moreover, whereas conventional antennas require physical dimensions comparable to the carrier wavelength, RARE detects incident radiation through photon-atom interactions enabled by the wave-particle duality of EM-waves~\cite{foot, rarclose1, rarclose2}. As a result, its operating frequency can be flexibly tuned over ultra-wide ranges simply by selecting appropriate atomic energy-level transitions, without modifying the underlying hardware structure~\cite{efm, rarclose3, rydhard1, rydhard3}.

These unique advantages make RARE a highly promising technology for next-generation wireless reception in extremely weak electromagnetic environments. Such scenarios include satellite and space-air-ground integrated networks, where ultra-high sensitivity is critical~\cite{sagn, antsp}, thereby surpassing the inherent limitations of conventional RF receivers. Building on this capability, subsequent studies have progressively established communication and signal processing frameworks for RARE. In~\cite{atomicjsac}, the feasibility of atomic demodulation was extended to multi-user scenarios by formulating atomic single/multiple-input-multiple-output (SIMO/MIMO) detection as a biased phase-retrieval problem and developing Expectation-Maximization Gerchberg-Saxton (EM-GS) algorithm for joint symbol recovery. Furthermore,~\cite{Precoding_atomicMIMO} introduced an analytic atomic-MIMO model, revealing a fundamental departure from conventional RF MIMO model due to the nonlinear magnitude-only input-output relation inherent to atomic sensing, and proposed in-phase/quadrature (I/Q)-aware precoding strategies to approach atomic-MIMO capacity limits. In parallel, quantum-enhanced spatial sensing techniques have been developed. Specifically, a MUltiple SIgnal Classification (MUSIC)-variant angle-of-arrival (AoA) estimation algorithm, termed Quantum-MUSIC, was proposed in~\cite{qmusic}, which enables multi-user AoA estimation using RARE by recovering channel information from magnitude-only atomic measurements and exploiting subspace orthogonality for estimation. Moreover, multi-band atomic sensing was investigated in~\cite{wsat}, demonstrating that RARE can exploit multiple atomic energy-level transitions to detect multi-frequency EM signals and estimate their spatial parameters via quantum probe-laser measurements. The authors in~\cite{TCOM25_Single_RAR_AoA} develop a framework for AoA detection using a single RARE element by modeling the spatially varying susceptibility induced by interference between the incident and local-oscillator (LO) signals and deriving a closed-form relationship between probe laser transmission and AoA. Beyond algorithmic developments, fundamental receiver physics has also been investigated, wherein the authors in~\cite{saatomic} established a theoretical foundation for single-antenna atomic beamforming and showed that magnitude-only atomic measurements inherently induce a structured nonlinear detection model, fundamentally distinguishing RARE from classical coherent RF front-ends.

Complementing these theoretical advances, recent work has focused on improving the practical sensing capability and robustness of RARE. In~\cite{cfatomic}, a calibration-free receiving scheme for RARE was proposed by modeling the combined direct and alternating current (D/AC)-Stark response and retrieving field amplitude through cycle-averaged spectral fitting, achieving highly sensitive sub-MHz electric-field detection without external calibration. More recently, system-level architectures such as atomic multi-user uplink detection~\cite{RAR_MU_MIMO_Uplink} and quantum-MIMO receiver arrays~\cite{RAQ_MIMO_Multiband} have been introduced, demonstrating the feasibility of RARE for spatial multiplexing and multi-user wireless communication. Collectively, these works establish RARE as a viable and promising platform for next-generation wireless communication systems. Despite these promising advances, RARE introduces a unique structural limitation. Due to the magnitude-only heterodyne readout inherent to atomic sensing with real-valued representation, symbol detection performance is fundamentally constrained by residual quadrature leakage even after reference alignment~\cite{Precoding_atomicMIMO, RIS_atomic_MIMO}. This inherent I/Q coupling cannot be fully eliminated by conventional beamforming or precoding alone, thereby fundamentally redefining the role of environment design in atomic-MIMO systems. Unlike conventional coherent architectures, where the propagation environment is optimized primarily to strengthen the effective channel, RARE-based systems require the channel itself to be shaped in accordance with the receiver’s nonlinear measurement structure. In particular, the propagation environment must be engineered not only to alleviate the induced I/Q coupling and enable reliable symbol discrimination under magnitude-only observations of RARE, but also to actively compensate for the receiver-induced nonlinearity, thereby fundamentally enhancing detection performance.

\subsection{Reconfigurable Propagation: RIS to FRIS}
In this context, alongside advances in receiver technologies, the capability to actively reconfigure wireless propagation environments has emerged as a key enabler for performance enhancement~\cite{alexsmartris, nfris}. In particular, reconfigurable intelligent surfaces (RIS) offer programmable control over electromagnetic wave propagation through passive reflecting elements, enabling coverage improvement and interference mitigation~\cite{RIST,risspm}. However, conventional RIS architectures suffer from inherent limitations due to fixed element locations and severe double-fading attenuation in cascaded channels~\cite{DF, HBRIS, HBRIS22, vsrelay}. To overcome these constraints, fluid RIS (FRIS) have been introduced as a new paradigm that enables spatial reconfigurability through dynamic port selection~\cite{FRISlook,FRISmag}, inspired by fluid~\cite{FAS, fluidjsac, fluidtut, fluidsurv, fluidoj} and reconfigurable~\cite{trimimo, trimimo22} antenna systems. Unlike conventional RIS, FRIS employs a dense grid of candidate ports, among which only a subset is selected, thereby providing an additional spatial degree-of-freedom (DoF) beyond phase control.

Early studies on FRIS established its fundamental performance advantages over conventional RIS architectures. In particular,~\cite{FRISlook} and~\cite{FRISpa} introduced the concept of position reconfigurability and developed analytical and optimization frameworks for FRIS-assisted systems. Specifically,~\cite{FRISlook} demonstrated significant performance improvements over conventional RIS in both single- and multi-user scenarios through joint port-selection and phase optimization, while~\cite{FRISpa} provided rigorous analytical characterizations of outage probability and capacity, including tight upper-bounds under statistical channel models. Practical architectures incorporating discrete phase shifts and port selection were investigated in~\cite{FRISonoff}, demonstrating substantial performance gains with reduced hardware complexity. Further research has extended FRIS to a wide range of scenarios, demonstrating its effectiveness in mitigating fundamental propagation limitations. In particular,~\cite{FRISambc} investigated FRIS-assisted ambient backscatter communication and showed that spatial port reconfiguration can effectively alleviate the severe double-fading effect inherent in cascaded channels, resulting in substantial improvements in achievable backscatter rates compared to conventional RIS-aided systems. The benefits of FRIS in secure communications were further explored in~\cite{FRISsec} and~\cite{FRISsec22}, where dynamic port selection was leveraged to enhance secrecy performance by exploiting spatial correlation and selectively strengthening legitimate channels while suppressing eavesdropping links. More recently, extensions incorporating element-level radiation pattern reconfigurability were proposed in~\cite{FRISbp}, enabling joint optimization of port locations, radiation patterns, and beamforming. 

In addition,~\cite{FRISppd} provided a systematic investigation of the interplay between spatial position optimization and phase design in FRIS, demonstrating that spatial reconfigurability introduces an additional DoF, particularly in scenarios with aperture and correlation constraints. These results collectively demonstrate that FRIS provides an additional spatial DoF beyond conventional phase-only control, significantly improving link reliability, achievable rate, and security performance.

\subsection{Receiver-Induced Channel Shaping for Atomic MIMO}
Nevertheless, existing FRIS designs, as is also the case for conventional RIS, primarily focus on coherent receivers and do not account for receiver-specific nonlinear measurement structures. By explicitly incorporating such characteristics into spatial reconfiguration design, additional performance gains can be unlocked beyond those achievable with channel-centric optimization alone. This distinction becomes particularly critical for RARE, whose magnitude-only detection mechanism fundamentally alters the input-output relationship. 

Yet, the role of spatial environment reconfiguration in mitigating atomic quadrature leakage remains largely unexplored. Though~\cite{RIS_atomic_MIMO} introduced RIS to mitigate the phase ambiguity inherent in atomic-MIMO receiver by aligning the equivalent cascaded channel with LO phase through phase optimization by Adam, its design remains fundamentally constrained by the architectural limitations of conventional RIS. In particular, conventional RIS architectures are inherently limited in realizing such receiver-aware channel shaping. This is because it can only adjust the phase of the reflected signals with fixed element locations, which restricts its ability to control the spatial structure of the effective channel required to mitigate quadrature leakage. As a result, the achievable channel adaptation is confined to phase alignment, which is insufficient to reshape the channel in accordance with the nonlinear measurement characteristics of RARE. In contrast, FRIS introduces an additional spatial DoF through fluid port reconfiguration, enabling dynamic repositioning of the selected elements. This spatial reconfigurability allows the system to actively sculpt the effective atomic channel beyond conventional phase-only control. As a result, it provides a fundamentally more powerful mechanism to suppress quadrature leakage and achieve receiver-compatible channel structures.

These observations reveal a fundamental gap between recent advances in quantum receiver technologies and existing propagation environment design, which largely overlook the receiver-induced channel shaping requirement imposed by nonlinear atomic measurements. Motivated by this perspective, this paper proposes a quadrature-leakage-aware FRIS-assisted RARE framework that explicitly incorporates receiver-induced channel shaping. Specifically, we jointly optimize transmit beamforming, FRIS port selection, and discrete phase configuration to suppress quadrature leakage in atomic-MIMO. 

\subsection{Contributions of This Work}
The main contributions are summarized as follows:
\begin{itemize}

\item \textbf{Receiver-induced channel shaping principle for FRIS-assisted atomic-MIMO systems:}
We establish a receiver-induced channel shaping principle by explicitly accounting for the magnitude-only heterodyne readout of RARE under the strong-reference regime. Unlike conventional coherent systems that prioritize channel gain or interference suppression, we identify residual quadrature leakage as the dominant performance bottleneck and show that the effective channel must be shaped to align with the nonlinear measurement structure of RARE. Based on this insight, we derive a signal-independent quadrature-leakage metric in closed-form, providing a receiver-aware objective for joint beamforming and FRIS configuration.

\item \textbf{Joint propagation design formulation with coupled spatial and widely-linear structures:}
We then formulate a receiver-aware propagation design problem in which the FRIS configuration and transmit beamforming are jointly optimized to shape the effective atomic channel for leakage suppression. The resulting problem is fundamentally different from conventional channel-centric RIS/FRIS designs because the objective is induced by the nonlinear magnitude-only measurement of RARE, not by standard SNR or rate maximization. This leads to a highly nonconvex mixed discrete-continuous program, where combinatorial port selection, discrete phase control, and widely-linear beamforming are tightly coupled through the effective FRIS-assisted atomic channel.


\item \textbf{Structured framework with joint beamforming and FRIS configuration with provable monotonic descent:}
We develop a structured alternating-optimization (AO) framework that decomposes the problem into beamforming, port selection, and phase design. The beamformer admits a widely-linear quadratic form, yielding a closed-form eigenvector solution via real-augmented representation. For port selection, a cross-entropy method (CEM) efficiently explores the combinatorial space, while phase refinement is handled via coordinate descent (CD) with one-dimensional updates and discrete projection, ensuring monotonic leakage reduction. The overall AO procedure thus guarantees a non-increasing objective sequence.

\item \textbf{Performance validation with near-exhaustive detection gains:}
Simulation results demonstrate fast and stable convergence, along with consistent BER improvements over benchmarks. The proposed FRIS-enabled design achieves near-exhaustive detection performance with significantly reduced complexity and outperforms conventional RIS-based atomic reception, highlighting the benefit of spatial reconfigurability in suppressing quadrature leakage and enhancing atomic-MIMO detection.
\end{itemize}

\begin{figure}[t]
  \begin{center}
    \includegraphics[width=0.7\columnwidth,keepaspectratio]{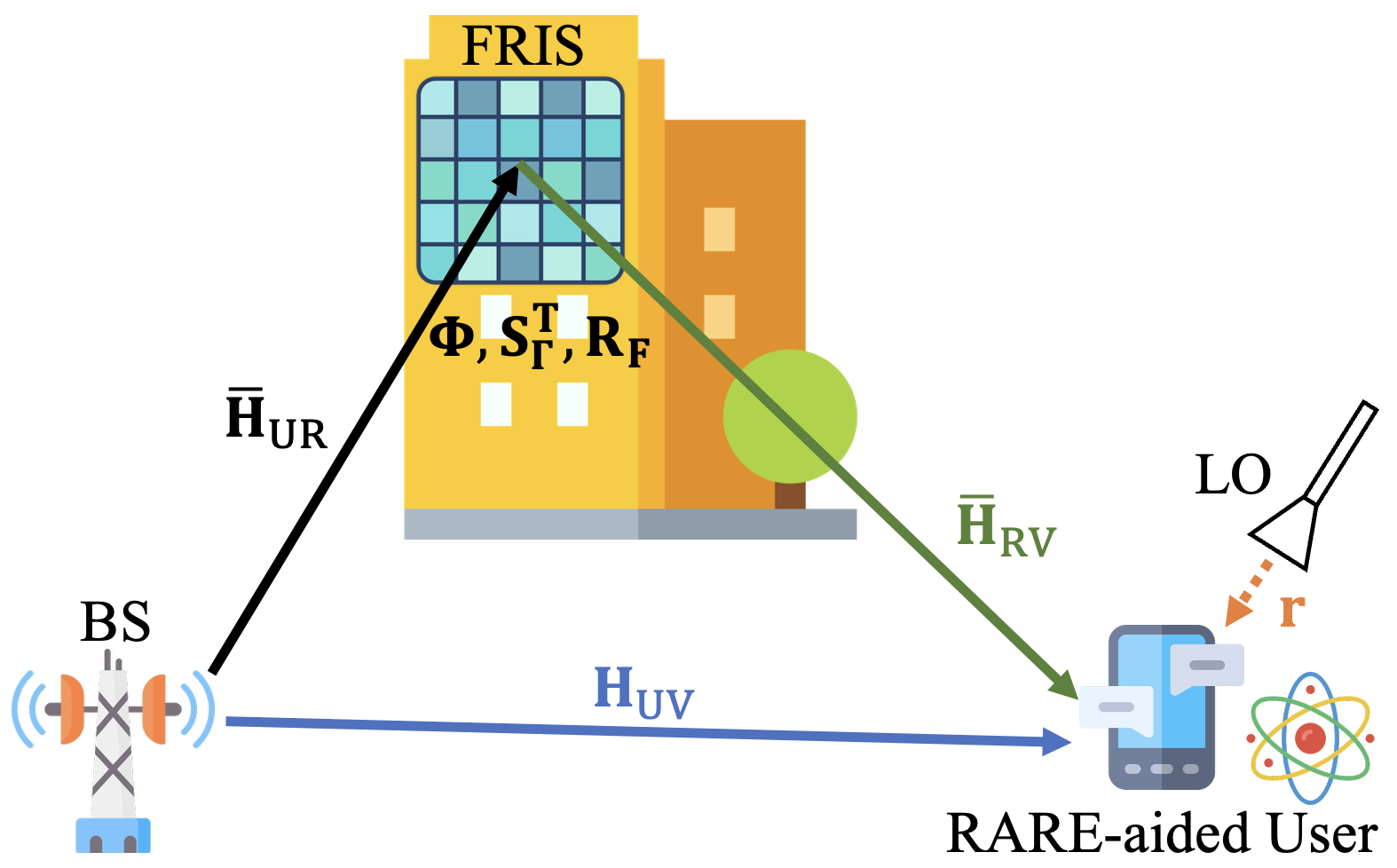}
    \caption{FRIS-assisted RARE architecture.}
    \label{fig_sys}
  \end{center}
\end{figure}
\section{System Model}
\label{sec:fris_embed}
\subsection{Comprehensive Signal Model}
As depicted in Fig.~\ref{fig_sys}, we consider an FRIS-assisted RARE architecture in which a BS equipped with $N_t$ antennas transmits signals toward an FRIS composed of $N = N_x \times N_x$ candidate ports. The reception is performed by an RARE equipped with $N_r$ vapor cells. Each FRIS element applies a discrete phase shift selected from a finite-resolution phase codebook $\mathcal{Q} \triangleq \left\{\tfrac{2\pi}{M_p}k\right\}_{k=0}^{M_p-1}$~\cite{FRISonoff}. The FRIS elements are arranged on a planar surface of size $W_x\lambda \times W_x\lambda$, where $\lambda$ denotes the carrier wavelength, resulting in an inter-element spacing of $d_x = \frac{W_x\lambda}{N_x}$. Unlike conventional RIS models that often assume independent per-element fading, FRIS candidate ports are spatially close and thus exhibit non-negligible port correlation~\cite{FRISlook, FRISnoma}. Following an isotropic scattering abstraction, we model the correlation between ports $m$ and $n$ by the zero-th order spherical Bessel function~\cite{FRISpa, FARIS}:
\begin{equation}
\label{eq:j0_corr}
\left[\mathbf R_{\rm F}\right]_{m,n}=j_0\left(\frac{2\pi}{\lambda}d_{m,n}\right),
\end{equation}
where $d_{m,n}$ is the inter-port distance between port $m$ and $n$. Accordingly, $\mathbf R_{\rm F}\in\mathbb R^{N\times N}$ is the FRIS port correlation matrix.

To consistently embed port correlation into both BS$\rightarrow$FRIS and FRIS$\rightarrow$RARE channels $\mathbf H_{\rm UR}\in\mathbb C^{N\times N_t}$ and $\mathbf H_{\rm RV}\in\mathbb C^{N_r\times N}$, respectively, we adopt the correlation-shaping model~\cite{FRISonoff}
\begin{equation}
\label{eq:HUR_corr}
\mathbf H_{\rm UR}=\mathbf R_{\rm F}^{1/2}\overline{\mathbf H}_{\rm UR},~\mathbf H_{\rm RV}=\overline{\mathbf H}_{\rm RV}\mathbf R_{\rm F}^{1/2},
\end{equation}
where $\overline{\mathbf H}_{\rm UR}\in\mathbb C^{N\times N_t}$ and $\overline{\mathbf H}_{\rm RV}\in\mathbb C^{N_r\times N}$ denote the uncorrelated channel matrices prior to correlation shaping. Herein, $\overline{\mathbf H}_{\rm UR}$ is modeled as Rician fading~\cite{FRISsec}. In contrast, $\overline{\mathbf H}_{\rm RV}$ and the direct link between BS and RARE $\mathbf H_{\mathrm{UV}}$ follow a physics-driven RARE model, and its entries will be explicitly specified in Section~\ref{subsec:rar_channel_spec}.

To model FRIS selection, we assume that only $M_o$ ports can be simultaneously selected out of the $N$ candidate ports:
\begin{equation}
\label{eq:Gamma_def}
\Gamma \subset \{1,\cdots,N\},~ |\Gamma| = M_o,
\end{equation}
where $\Gamma$ denotes the selected port set, and define the selection matrix $\mathbf S_\Gamma\in\{0,1\}^{N\times M_o}$ whose columns are canonical basis vectors indexed by $\Gamma$. The phase-shift matrix over the selected ports is
\begin{equation}
\label{eq:PhiGamma}
\mathbf \Phi\triangleq
\mathrm{diag}\left(e^{j\theta_{i_1}},\cdots,e^{j\theta_{i_{M_o}}}\right)~(\theta_{i_\ell}\in\mathcal Q),
\end{equation}
where $\Gamma=\{i_1,\cdots,i_{M_o}\}$ and $\boldsymbol\theta\triangleq[\theta_{i_1} \cdots \theta_{i_{M_o}}]^{\mathrm T}$. Accordingly, the effective correlated subchannels are
\begin{equation}
\begin{aligned}
\label{eq:HUR_Gamma_corr}
&\mathbf S_\Gamma^{\mathrm T}\mathbf H_{\rm UR}=\mathbf S_\Gamma^{\mathrm T}\mathbf R_{\rm F}^{1/2}\overline{\mathbf H}_{\rm UR}
\in\mathbb C^{M_o\times N_t}, \\
&\mathbf H_{\rm RV}\mathbf S_\Gamma
=
\overline{\mathbf H}_{\rm RV}\mathbf R_{\rm F}^{1/2}\mathbf S_\Gamma
\in\mathbb C^{N_r\times M_o}.
\end{aligned}
\end{equation}
Therefore, the end-to-end equivalent channel $\mathbf H_{\rm eq}(\Gamma,\boldsymbol\theta)$ admits the explicit correlation-aware form
\begin{equation}
\label{eq:cascaded_corr_expand}
\begin{aligned}
\mathbf H_{\rm eq}(\Gamma,\boldsymbol\theta)&\triangleq\mathbf H_{\rm RV}\mathbf S_\Gamma \mathbf \Phi \mathbf S_\Gamma^{\mathrm T}\mathbf H_{\rm UR}+\mathbf H_{\rm UV}\\
&=\overline{\mathbf H}_{\rm RV}\mathbf R_{\rm F}^{1/2}\mathbf S_\Gamma \mathbf \Phi
\mathbf S_\Gamma^{\mathrm T}\mathbf R_{\rm F}^{1/2}\overline{\mathbf H}_{\rm UR}+\mathbf H_{\rm UV},
\end{aligned}
\end{equation}
where $\mathbf H_{\rm UV}\in\mathbb C^{N_r\times N_t}$ is the BS-RARE direct channel.

\subsection{RARE-Aware Channel Specification}
\label{subsec:rar_channel_spec}
In RARE, the incident RF electric field is transduced into an optical readout through Rydberg-atom dipole interactions~\cite{atomicjsac}. Following the widely used multipath abstraction for atomic antennas, for a generic transmitter $u$ to $m$th vapor cell, the corresponding channel entry $h_{m,u}$ is modeled as~\cite{Precoding_atomicMIMO}
\begin{equation}
\label{eq:atomic_pathsum_generic}
h_{m,u}
=
\sum_{\ell=1}^{L_{m,u}}
\frac{1}{\hbar}
\boldsymbol\mu_{\rm RF}^{\mathrm T}\bm\epsilon_{m,u,\ell}
\rho_{m,u,\ell}
e^{j\varphi_{m,u,\ell}},
\end{equation}
where $\boldsymbol\mu_{\rm RF}\in\mathbb R^{3\times 1}$ is the RF electric dipole moment of Rydberg atoms and $\hbar$ is the reduced Planck constant, $L_{m,u}$ is the number of paths, $\bm\epsilon_{m,u,\ell}\in\mathbb R^{3\times 1}$ is the polarization direction,
$\rho_{m,u,\ell}$ is the path-loss term, and $\varphi_{m,u,\ell}$ is the phase shift of the $\ell$th path.

Using~\eqref{eq:atomic_pathsum_generic}, we specify $\overline{\mathbf H}_{\rm RV}\in\mathbb C^{N_r\times N}$ as
\begin{equation}
\label{eq:HRV_atomic_entry}
\begin{aligned}
&\big[\mathbf H_{\rm RV}\big]_{m,n}=\sum_{\ell=1}^{L_{m,n}}
\frac{1}{\hbar}
\boldsymbol\mu_{\rm RF}^{\mathrm T}\bm\epsilon_{m,n,\ell}
\rho_{m,n,\ell}
e^{j\varphi_{m,n,\ell}}\\
&(m=1,\cdots,N_r, n=1,\cdots,N).
\end{aligned}
\end{equation}
This expression explicitly characterizes the polarization-dependent EM coupling between the $n$th FRIS port and the $m$th vapor cell based on the underlying atomic dipole interaction. Likewise, $\mathbf H_{\rm UV}\in\mathbb C^{N_r\times N_t}$ is modeled by
\begin{equation}
\label{eq:HUV_atomic_entry}
\begin{aligned}
&\big[\mathbf H_{\rm UV}\big]_{m,t}=\sum_{\ell=1}^{L_{m,t}}\frac{1}{\hbar}\boldsymbol\mu_{\rm RF}^{\mathrm T}\bm\epsilon_{m,t,\ell}\rho_{m,t,\ell}
e^{j\varphi_{m,t,\ell}}\\
&(m=1,\cdots,N_r, t=1,\cdots,N_t),
\end{aligned}
\end{equation}
which reflects the same physics-driven atomic reception mechanism between the BS antennas and the RARE vapor cells.

\subsection{RARE Front-End Model}
\label{subsec:rar_frontend}
Following the standard heterodyne atomic reception model, the receiver observes a magnitude-only readout. Let $\mathbf x\in\mathbb C^{N_t\times 1}$ denote the transmitted complex baseband signal: $\mathbf x=\mathbf ws$ with $\mathbb E[|s|^2]=1$ and $\|\mathbf w\|_2^2= P$, and let $\mathbf r\in\mathbb C^{N_r\times 1}$ denote the received reference vector by LO~\cite{atomicjsac, Precoding_atomicMIMO}:
\begin{equation}
\label{eq:r_def}
\mathbf r \triangleq [r_1 \cdots r_{N_r}]^{\mathrm T}\in\mathbb C^{{N_r}\times 1},~r_m = \frac{s_b}{\hbar} \bm\mu_{\rm RF}^{\mathrm{T}}\bm\epsilon_{b,m} \sqrt{P_b} \rho_{b,m} e^{j\phi_{b,m}},
\end{equation}
where $s_b$ is the known reference symbol, $P_b$ is the LO power, and $\bm\epsilon_{b,m},\rho_{b,m},\phi_{b,m}$ denote the polarization direction, path-loss, and phase of the LO link to $m$th vapor cell, respectively. 

\begin{figure}[t]
	\begin{center}
		\includegraphics[width=0.9\columnwidth,keepaspectratio]%
		{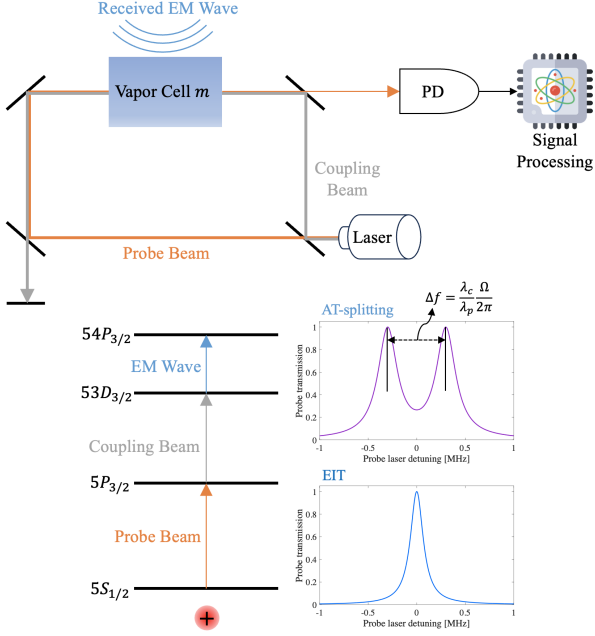}
		\caption{Illustration of signal processing in the RARE. Specifically, the incident EM field induces coupling between two highly excited Rydberg states (e.g., $53D_{3/2}$ and $54P_{3/2}$), resulting in AT splitting phenomenon. The observed spectral separation $\Delta f$ is then converted into the corresponding Rabi frequency $\Omega$ according to~\eqref{rabidef}.}
		\label{fig_rabi}
	\end{center}
\end{figure}
As illustrated in Fig.~\ref{fig_rabi}, the incident RF electric field interacts with the Rydberg atoms and induces EIT and AT splitting~\cite{efm}. The resulting AT-splitting interval $\{\Delta f_m\}$ is directly proportional to the corresponding Rabi frequency $\{\Omega_m\}$~\cite{Precoding_atomicMIMO, fox}, given by
\begin{equation}
\label{rabidef}
\Delta f_m=\frac{\lambda_c}{\lambda_p}\frac{\Omega_m}{2\pi}~(m=1, \cdots, N_r),
\end{equation}
where $\lambda_c$ and $\lambda_p$ denote the wavelengths of the coupling and probe lasers, respectively. Consequently, by measuring $\Delta f_m$, the corresponding $\Omega_m$ can be recovered using~\eqref{rabidef}. Furthermore, since $\Omega_m$ equals to the magnitude of the received electric field, this enables indirect retrieval of $s$ through atomic sensing~\cite{qmusic, atomicjsac}. Accordingly, the atomic receiver measurement can be modeled as
\begin{equation}
\label{eq:rar_magnitude_model}
\mathbf y=\left|\mathbf H_{\rm eq}(\Gamma,\boldsymbol\theta)\mathbf x + \mathbf r + \mathbf n\right|,
\end{equation}
where $\mathbf n\sim\mathcal{CN}(\mathbf 0,\sigma^2\mathbf I_{N_r})$ denotes the QSN, and the absolute value is applied element-wise. Each real-valued entry of $\mathbf y$ corresponds to $\{\Omega_m\}$.

Herein, since the LO is typically placed much closer to the vapor cells than the remote transmitter, each entry of $\mathbf r$ can dominate the wireless term and noise. In the strong-reference regime, $|r_m|\gg |[\mathbf H_{\rm eq}\mathbf x]_m+n_m|$, we can linearize~\eqref{eq:rar_magnitude_model} as~\cite{Precoding_atomicMIMO}
\begin{equation}
\label{eq:rar_strong_ref_linear}
\mathbf y-|\mathbf r|
\approx
\Re\left\{\mathbf{H}_{\rm eq}\mathbf x\circ e^{-j\angle \mathbf r}\right\}
+
\widetilde{\mathbf n},
\end{equation}
where $\widetilde{\mathbf n}\sim\mathcal N\left(\mathbf 0,\tfrac{\sigma^2}{2}\mathbf I_{N_r}\right)$.
\begin{remark}
\label{r1}
The linearization in~\eqref{eq:rar_strong_ref_linear} highlights a key distinction from conventional coherent MIMO systems. In standard RF receivers, the propagation environment is typically optimized to enhance channel gain, improve conditioning, or suppress interference~\cite{RISEE, HBRIS}. In contrast, for RARE with magnitude-only heterodyne readout in~\eqref{eq:rar_magnitude_model}, the effective channel quality depends not only on its strength but also on its alignment with the receiver’s nonlinear measurement structure after reference alignment. Consequently, the design objective is no longer purely channel-centric; instead, the environment must be shaped so that the resulting atomic channel is inherently compatible with real-part detection. This leads to a receiver-induced channel shaping perspective, where the residual quadrature component becomes the primary structural distortion to be minimized.
\end{remark}
\subsection{Quadrature-Leakage-Aware Environment Design}
\label{subsec:quadrature_formulation}
Under the strong-reference regime in~\eqref{eq:rar_strong_ref_linear}, the measurement is governed by the in-phase component $\Re\left\{\mathbf{H}_{\rm eq}\mathbf x\circ e^{-j\angle \mathbf r}\right\}$, from which it directly follows that the corresponding quadrature component is given by\begin{equation}
\label{eq:quad_vec_def}
\begin{aligned}
\mathbf z(\Gamma,\boldsymbol\theta, \mathbf w)
&\triangleq
\Im\left\{
\mathbf H_{\rm eq}(\Gamma,\boldsymbol\theta)\mathbf x
\circ e^{-j\angle\mathbf r}
\right\}\\
&=\Im\left\{\mathbf D_r \mathbf H_{\rm eq}(\Gamma,\boldsymbol\theta)\mathbf x\right\},
\end{aligned}
\end{equation}
where $\mathbf D_r \triangleq \mathrm{diag}\big(e^{-j\angle\mathbf r}\big)$, where $e^{-j(\cdot)}$ is taken element-wise, is the diagonal reference-alignment matrix. Thereby, if it vanishes, then $\mathbf a(\Gamma,\boldsymbol\theta,\mathbf w)\triangleq \mathbf D_r \mathbf H_{\rm eq}(\Gamma,\boldsymbol\theta)\mathbf x$ is purely real-valued and thus
\begin{equation}
\label{prpr}
\mathbf y-|\mathbf r|
\approx
\mathbf a(\Gamma,\boldsymbol\theta,\mathbf w)
+\widetilde{\mathbf n}
=
\mathbf D_r \mathbf H_{\rm eq}(\Gamma,\boldsymbol\theta)\mathbf x+\widetilde{\mathbf n}.
\end{equation}
Therefore, up to the known $\mathbf D_r$, the nonlinear magnitude-only atomic observation reduces to the same linear coherent form as a known-phase receiver. This shows that the zero-quadrature condition is an ideal structural target for atomic-MIMO detection. Since exact fulfillment of this condition is generally impossible, a natural design strategy is to minimize the residual quadrature component, thereby driving the atomic readout as close as possible to the coherent known-phase model. This will be compared in Section~\ref{beco}.

Since $\mathbf x$ varies across channel uses, we adopt a signal-independent design criterion based on the average quadrature-leakage energy:
\begin{equation}
\label{eq:L_def}
\mathcal L(\Gamma,\boldsymbol\theta, \mathbf w)
\triangleq
\mathbb E_{\mathbf x}
\Big[
\|
\Im\{\mathbf D_r \mathbf H_{\rm eq}(\Gamma,\boldsymbol\theta)\mathbf x\}
\|_2^2
\Big].
\end{equation}
Specifically, $\mathcal L$ serves as a receiver-aware measure of structural compatibility. It quantifies the extent to which the effective FRIS-shaped atomic channel deviates from the ideal real-valued structure imposed by the strong-reference regime.

To evaluate~\eqref{eq:L_def} in closed-form, we define the covariance and pseudo-covariance matrices of $\mathbf x=\mathbf ws$, respectively:
\begin{equation}
\label{eq:cov_pseudocov}
\mathbf Q \triangleq \mathbb E[\mathbf x \mathbf x^{*}]=\mathbf w \mathbf w^{*},~\mathbf P \triangleq \mathbb E[\mathbf x \mathbf x^{\mathrm{T}}]=\mathbf w \mathbf w^{\mathrm{T}}\kappa,
\end{equation}
with $\kappa\triangleq\mathbb E[s^2]$. Let
\begin{equation}
\label{eq:A_def}
\mathbf A(\Gamma,\boldsymbol\theta)
\triangleq
\mathbf D_r \mathbf H_{\rm eq}(\Gamma,\boldsymbol\theta),
~
\mathbf y_a \triangleq \mathbf A \mathbf x.
\end{equation}
Using the identity $\Im\{\mathbf y_a\}=\frac{1}{2j}(\mathbf y_a - \mathbf y_a^*)$, the squared norm of $\Im\{\mathbf y_a\}$ can be expressed as
\begin{equation}
\label{eq:imag_norm_expand}
\|\Im\{\mathbf y_a\}\|_2^2=-\frac{1}{4}
(\mathbf y_a-\mathbf y_a^*)^{\mathrm{T}}
(\mathbf y_a-\mathbf y_a^*).
\end{equation}
Taking expectation and expanding each term yields
\begin{equation}
\label{eq:imag_expect_expand}
\mathbb E
\big[
\|
\Im\{\mathbf y_a\}
\|_2^2
\big]
=
\frac{1}{2}
\mathrm{tr}\big(
\mathbb E[\mathbf y_a\mathbf y_a^{*}]
\big)
-
\frac{1}{2}
\Re
\left\{
\mathrm{tr}\big(
\mathbb E[\mathbf y_a\mathbf y_a^{\mathrm{T}}]
\big)
\right\}.
\end{equation}
Substituting $\mathbf y_a=\mathbf A\mathbf x$ and using~\eqref{eq:cov_pseudocov} gives
\begin{equation}
\label{eq:cov_y}
\mathbb E[\mathbf y_a\mathbf y_a^{*}]
=
\mathbf A \mathbf Q \mathbf A^{*},
~
\mathbb E[\mathbf y_a\mathbf y_a^{\mathrm{T}}]
=
\mathbf A \mathbf P \mathbf A^{\mathrm{T}}.
\end{equation}
Therefore,
\begin{equation}
\label{eq:L_trace_A}
\mathcal L(\Gamma,\boldsymbol\theta, \mathbf w)
=
\frac{1}{2}
\left(
\mathrm{tr}(\mathbf A \mathbf Q \mathbf A^{*})
-
\Re\{\mathrm{tr}(\mathbf A \mathbf P \mathbf A^{\mathrm{T}})\}
\right).
\end{equation}
Finally, substituting $\mathbf A=\mathbf D_r \mathbf H_{\rm eq}$ and using the cyclic invariance of the trace and the diagonal structure of $\mathbf D_r$ yields
\begin{equation}
\label{eq:L_final}
\mathcal L(\Gamma,\boldsymbol\theta, \mathbf w)
=
\frac{1}{2}
\left(
\mathrm{tr}
\big(
\mathbf H_{\rm eq} \mathbf Q \mathbf H_{\rm eq}^{*}
\big)
-
\Re
\left\{
\mathrm{tr}
\big(
\mathbf H_{\rm eq}
\mathbf P
\mathbf H_{\rm eq}^{\mathrm{T}}
\mathbf D_r^2
\big)
\right\}
\right).
\end{equation}
Accordingly, the corresponding optimization problem is formulated as
\begin{equation}
\label{prob:fris_quad_leakage}
\begin{aligned}
\min_{\Gamma,\boldsymbol\theta, \mathbf w}~ \eqref{eq:L_final}~
\text{s.t.}~ |\Gamma|=M_o,~\theta_{i_\ell}\in\mathcal Q~(\forall i_\ell\in\Gamma),~\|\mathbf w\|_2^2 =P.
\end{aligned}
\end{equation}
Problem~\eqref{prob:fris_quad_leakage} is a highly nonconvex mixed discrete–continuous optimization problem, where combinatorial port selection, finite-resolution phase constraints, and a continuous beamformer are tightly coupled through the equivalent channel and pseudo-covariance structure. As a result, exhaustive search becomes computationally intractable. This intricate coupling reflects the fact that spatial reconfiguration via FRIS directly governs the propagation-induced phase profile through joint port selection and phase control. As a result, quadrature leakage is not merely a signal processing artifact, but a propagation-dependent phenomenon that links the receiver’s nonlinear measurement structure with environment design.

Motivated by this observation, we develop an efficient AO framework that decomposes the original problem into tractable subproblems with respect to $\mathbf w$, $\Gamma$, and $\boldsymbol\theta$, each admitting either a closed-form solution or an efficient iterative update.

\section{Proposed AO Framework}
\label{subsec:ao_joint}
\subsection{Update of $\mathbf w$ for fixed $(\Gamma,\boldsymbol\theta)$}
Let $\mathbf H\triangleq \mathbf H_{\rm eq}(\Gamma,\boldsymbol\theta)$ and $\mathbf D\triangleq \mathbf D_r^2$. Then~\eqref{prob:fris_quad_leakage} reduces (up to a constant factor) to the widely-linear quadratic form
\begin{equation}
\min_{\|\mathbf w\|_2^2= P}
~
\mathbf w^{*}\mathbf R\mathbf w
-
\Re\{\kappa\mathbf w^{\mathrm{T}}\mathbf B\mathbf w\},
\label{prob:w_update}
\end{equation}
where $\mathbf R\triangleq \mathbf H^{*}\mathbf H$ and $\mathbf B\triangleq \mathbf H^{\mathrm{T}}\mathbf D\mathbf H$.

To obtain a closed-form update, we employ the real-augmented representation
$\mathbf w=\mathbf w_{\rm R}+j\mathbf w_{\rm I}$ and $\tilde{\mathbf w}\triangleq
[\mathbf w_{\rm R}^{\mathrm{T}}~\mathbf w_{\rm I}^{\mathrm{T}}]^{\mathrm{T}}
\in\mathbb R^{2N_t}$.
Define the real and imaginary parts
\begin{equation}
\label{rij}
\mathbf R=\mathbf R_{\rm R}+j\mathbf R_{\rm I},~\kappa\mathbf B=\mathbf C_{\rm R}+j\mathbf C_{\rm I},
\end{equation}
where $\mathbf R_{\rm R},\mathbf R_{\rm I},\mathbf C_{\rm R},\mathbf C_{\rm I}\in\mathbb R^{N_t\times N_t}$. Using standard real-valued representations of complex quadratic forms, the objective in~\eqref{prob:w_update} can be equivalently written as
\begin{equation}
\label{csw}
\mathbf w^{*}\mathbf R\mathbf w-\Re\{\kappa\mathbf w^{\mathrm{T}}\mathbf B\mathbf w\}=\tilde{\mathbf w}^{\mathrm{T}} \mathbf G
\tilde{\mathbf w},
\end{equation}
where the symmetric real matrix $\mathbf G\in\mathbb R^{2N_t\times 2N_t}$ is explicitly given by
\begin{equation}
\label{eq:G_explicit}
\mathbf G
=
\begin{bmatrix}
\mathbf R_{\rm R}-\mathbf C_{\rm R}
&
-\mathbf R_{\rm I}+\mathbf C_{\rm I}
\\
\mathbf R_{\rm I}+\mathbf C_{\rm I}
&
\mathbf R_{\rm R}+\mathbf C_{\rm R}
\end{bmatrix}.
\end{equation}
Therefore, the beamformer update reduces to the real-valued quadratic program
\begin{equation}
\min_{\|\tilde{\mathbf w}\|_2^2=P}
~
\tilde{\mathbf w}^{\mathrm{T}}\mathbf G\tilde{\mathbf w},
\label{prob:w_update_real}
\end{equation}
whose global minimizer is given by
\begin{equation}
\label{gmw}
\tilde{\mathbf w}^\star=\sqrt{P}\mathbf v_{\min}(\mathbf G),
\end{equation}
where $\mathbf v_{\min}(\mathbf G)$ denotes the unit-norm eigenvector corresponding to the smallest eigenvalue of $\mathbf G$.
The complex $\mathbf w^\star$ is then recovered by recombining the real and imaginary parts of $\tilde{\mathbf w}^\star$.

\subsection{Update of $\Gamma$ for Fixed $(\mathbf w,\boldsymbol\theta)$}
\label{subsubsec:cem_update_Gamma}
For a fixed $\boldsymbol\theta$, denote the objective value as $f(\Gamma)\triangleq \mathcal L(\Gamma,\boldsymbol\theta, \mathbf w)$. To explore the combinatorial search space efficiently, we employ the CEM, which adaptively updates a probabilistic sampling distribution to concentrate on promising regions of the solution space~\cite{FRISonoff, FRISsec, CEM}. We first parameterize a sampling distribution over ports by a probability mass function (PMF):
\begin{equation}
\label{eq:pmf_def}
\mathbf p^{(k)}=[p_1^{(k)}~\cdots~p_N^{(k)}]^{\mathrm{T}},~ p_n^{(k)}\in(0,1),~\sum_{n=1}^N p_n^{(k)}=1,
\end{equation}
where $k$ denotes the CEM iteration index. At iteration $k$, we draw $K$ independent and identically distributed (i.i.d.) candidate sets $\{\Gamma_i^{(k)}\}_{i=1}^K$ by sampling $M_o$ distinct indices {without replacement} according to $\mathbf p^{(k-1)}$. 
For each realization $\Gamma_i^{(k)}$, we evaluate $f(\Gamma_i^{(k)})$ by substituting 
$\mathbf S_{\Gamma_i^{(k)}}$ and $\boldsymbol\theta_{\Gamma_i^{(k)}}$ into $\mathbf H_{\rm eq}(\Gamma,\boldsymbol\theta)$ and then into~\eqref{eq:L_final}.
We sort $\{f(\Gamma_i^{(k)})\}_{i=1}^K$ in ascending order and define the elite index set
\begin{equation}
\label{eq:elite_set}
\mathcal E^{(k)}\triangleq \left\{i :  f(\Gamma_i^{(k)})\le f_{(K_e)}^{(k)}\right\},
~ 
K_e\triangleq \lceil \rho K\rceil,
\end{equation}
where $f_{(K_e)}^{(k)}$ is the $K_e$th order statistic and $\rho\in(0,1)$ is the elite ratio.
The PMF is then updated by maximizing the likelihood of the elite samples, yielding the closed-form update
\begin{equation}
\label{eq:pmf_update_raw}
\widehat p_n^{(k)}
=
\frac{1}{K_e}\sum_{i\in\mathcal E^{(k)}}\mathbbm 1\{n\in\Gamma_i^{(k)}\}~(n=1,\cdots,N),
\end{equation}
where $\mathbbm 1(\cdot)$ is the indicator function, followed by smoothing:
\begin{equation}
\label{eq:pmf_smooth}
p_n^{(k)} = (1-\alpha)p_n^{(k-1)}+\alpha\widehat p_n^{(k)}~ (\alpha\in(0,1]),
\end{equation}
to avoid premature convergence.
After $K_{\rm CEM}$ iterations, we output the port set by selecting the $M_o$ indices with the largest probabilities:
\begin{equation}
\label{eq:Gamma_from_p}
\Gamma^{+}=\mathrm{Top}_{M_o}\big(\mathbf p^{(K_{\rm CEM})}\big).
\end{equation}
\subsection{Update of $\boldsymbol\theta$ for fixed $(\Gamma, \mathbf w)$}
\label{subsubsec:theta_update_rigorous}
For a fixed $\Gamma=\{i_1,\cdots,i_{M_o}\}$, we optimize $\boldsymbol\theta=[\theta_{i_1} \cdots \theta_{i_{M_o}}]^{\mathrm{T}}$ by exploiting the single-coordinate structure of $\mathbf H_{\rm eq}(\Gamma,\boldsymbol\theta)$. Let $\bm\phi\triangleq [e^{j\theta_{i_1}} \cdots e^{j\theta_{i_{M_o}}}]^{\mathrm{T}}\in\mathbb C^{M_o}$ denote the unit-modulus phase vector restricted to the selected ports, with each entry constrained to the discrete alphabet $\mathcal A\triangleq\{e^{j q} : q\in\mathcal Q\}$. Define the FRIS-induced term
\begin{equation}
\label{eq:Heq_affine_phi}
\mathbf H_{\rm eq}(\Gamma,\boldsymbol\theta)
=
\mathbf H_{\rm UV}
+
\sum_{\ell=1}^{M_o} \phi_\ell \mathbf G_\ell,
\end{equation}
where $\phi_\ell=e^{j\theta_{i_\ell}}$ and
\begin{equation}
\label{eq:Gell_def}
\mathbf G_\ell
\triangleq
\mathbf H_{\rm RV}\mathbf e_{i_\ell}\mathbf e_{i_\ell}^{\mathrm{T}}\mathbf H_{\rm UR}
 \in\mathbb C^{N_r\times N_t},
\end{equation}
with $\mathbf e_{i_\ell}$ denoting the $i_\ell$th canonical basis vector.
The decomposition~\eqref{eq:Heq_affine_phi} follows directly from
$\mathbf H_{\rm RV}\mathbf S_\Gamma \mathbf\Phi \mathbf S_\Gamma^{\mathrm{T}}\mathbf H_{\rm UR}
=\sum_{\ell=1}^{M_o} e^{j\theta_{i_\ell}}\mathbf H_{\rm RV}\mathbf e_{i_\ell}\mathbf e_{i_\ell}^{\mathrm{T}}\mathbf H_{\rm UR}$.

For notational simplicity, define $\mathbf Y(\bm\phi)\triangleq \mathbf H_{\rm eq}(\Gamma,\boldsymbol\theta)$. Then~\eqref{eq:L_final} can be rewritten as
\begin{equation}
\label{eq:L_phi_form}
\mathcal L(\bm\phi)=\frac12\Big(\mathrm{tr}(\mathbf Y(\bm\phi)\mathbf Q\mathbf Y(\bm\phi)^{*})-\Re\{\mathrm{tr}(\mathbf Y(\bm\phi)\mathbf P\mathbf Y(\bm\phi)^{\mathrm{T}}\mathbf D)\}
\Big).
\end{equation}
Now fix all entries of $\bm\phi$ except $\phi_\ell$ and let
\begin{equation}
\label{eq:Y_minus_ell}
\mathbf Y_{-\ell}\triangleq \mathbf H_{\rm UV}+\sum_{m\neq \ell}\phi_m\mathbf G_m,
\end{equation}
so that $\mathbf Y(\bm\phi)=\mathbf Y_{-\ell}+\phi_\ell\mathbf G_\ell$.
Substituting $\mathbf Y_{-\ell}+\phi_\ell\mathbf G_\ell$ into~\eqref{eq:L_phi_form} and expanding the trace terms, one can collect all $\phi_\ell$-dependent terms into the following scalar form:
\begin{equation}
\label{eq:L_ell_scalar_form}
\mathcal L(\bm\phi)=C+\Re\{\alpha_\ell\phi_\ell\}+\Re\{\beta_\ell\phi_\ell^2\}~(|\phi_\ell|=1),
\end{equation}
where $C$ is independent of $\phi_\ell$, and the coefficients $\alpha_\ell$ and $\beta_\ell$ are given by
\begin{equation}
\begin{aligned}
\label{eq:alpha_beta_def}
\alpha_\ell=&\mathrm{tr}\left(\mathbf G_\ell \mathbf Q \mathbf Y_{-\ell}^{*}\right)-\frac{1}{2}\kappa\mathrm{tr}\left(\mathbf G_\ell\mathbf w\mathbf w^{\mathrm{T}}\mathbf Y_{-\ell}^{\mathrm{T}}\mathbf D\right)\\
&-\frac{1}{2}\kappa\mathrm{tr}\left(\mathbf Y_{-\ell}\mathbf w\mathbf w^{\mathrm{T}}\mathbf G_\ell^{\mathrm{T}}\mathbf D\right),\\
\beta_\ell=&-\frac{1}{2}\kappa\mathrm{tr}\left(\mathbf G_\ell\mathbf w\mathbf w^{\mathrm{T}}\mathbf G_\ell^{\mathrm{T}}\mathbf D\right),
\end{aligned}
\end{equation}
The form~\eqref{eq:L_ell_scalar_form} follows because the first trace term in~\eqref{eq:L_phi_form} is quadratic in $\mathbf Y(\bm\phi)$ and thus contributes at most linear terms in $\phi_\ell$ under $|\phi_\ell|=1$, while the second (pseudo-covariance) trace term involves $\mathbf Y(\bm\phi)^{\mathrm{T}}$ and yields both linear and quadratic terms in $\phi_\ell$. Hence under~\eqref{eq:L_ell_scalar_form}, the $\ell$th phase update reduces to the scalar problem
\begin{equation}
\label{eq:phi_cont_problem}
\min_{|\phi|=1}~
g(\phi)\triangleq \Re\{\alpha_\ell \phi\}+\Re\{\beta_\ell \phi^2\}.
\end{equation}
Let $\phi\triangleq e^{j\theta}$ with $\theta\in[0, 2\pi]$. Then
\begin{equation}
\label{eq:g_u}
\psi(\theta)\triangleq g(e^{j\theta})=\Re\{\alpha_\ell e^{j\theta}+\beta_\ell e^{j2\theta}\}.
\end{equation}
A stationary point satisfies $\frac{d}{d\theta}g(e^{j\theta})=0$. Using
\begin{equation}
\label{ddt}
\frac{d}{d\theta}\Re\{ce^{jm\theta}\}=-m\Im\{ce^{jm\theta}\},
\end{equation}
we obtain the stationarity condition with $e^{j\theta}=u$:
\begin{equation}
\label{eq:stationary_im}
\Im\{\alpha_\ell u+2\beta_\ell u^2\}=0.
\end{equation}
Condition~\eqref{eq:stationary_im} is equivalent to requiring $\alpha_\ell u+2\beta_\ell u^2$ to be real-valued, i.e.,
\begin{equation}
\label{eq:real_condition}
\alpha_\ell u+2\beta_\ell u^2 = \left(\alpha_\ell u+2\beta_\ell u^2\right)^*= \alpha_\ell^* u^{-1}+2\beta_\ell^* u^{-2}.
\end{equation}
Multiplying both sides of~\eqref{eq:real_condition} by $u^2$ yields the quartic equation
\begin{equation}
\label{eq:quartic_u}
2\beta_\ell u^4+\alpha_\ell u^3-\alpha_\ell^* u-2\beta_\ell^*=0.
\end{equation}
Let $\mathcal U_\ell$ denote the set of roots of~\eqref{eq:quartic_u} that lie on the unit circle:
\begin{equation}
\label{eq:U_set}
\mathcal U_\ell \triangleq \{u : u \text{ solves~\eqref{eq:quartic_u} and } |u|=1\}.
\end{equation}
Then the global minimizer of~\eqref{eq:phi_cont_problem} is obtained by evaluating $g(u)$ over the finite candidate set $\mathcal U_\ell$:
\begin{equation}
\label{eq:u_star}
\phi_\ell^{\rm cont} \in \argmin_{u\in \mathcal U_\ell} \Re\{\alpha_\ell u+\beta_\ell u^2\}.
\end{equation}
Finally, the discrete phase is obtained by projecting onto $\mathcal A$:
\begin{equation}
\label{eq:phi_quantize}
\phi_\ell^+=\Pi_{\mathcal A}\left(\phi_\ell^{\rm cont}\right)=e^{j\frac{2\pi}{M_p}\mathrm{round}\left(\frac{M_p}{2\pi}\angle \phi_\ell^{\rm cont}\right)},
\end{equation}
and we set $\theta_{i_\ell}\leftarrow \angle \phi_\ell^+$.
\begin{remark}
\label{r11}
Since $\psi(\theta)$ is continuous on the compact interval $[0,2\pi]$, it attains a global minimum at some $\theta^\star\in[0,2\pi]$. Moreover, $\psi$ is continuously differentiable and $2\pi$-periodic with $\psi(0)=\psi(2\pi)$, so $\theta^\star$ can be taken as an interior minimizer (equivalently on the circle), and thus it satisfies the first-order necessary condition $\psi'(\theta^\star)=0$. Letting $u^\star=e^{j\theta^\star}$ yields~\eqref{eq:stationary_im}, which is equivalent to~\eqref{eq:quartic_u}. Therefore,~\eqref{eq:quartic_u} admits at least one unit-modulus root, i.e., $\mathcal U_\ell\neq\emptyset$. This theoretical existence result is further corroborated numerically in Section~\ref{rpf}, where we explicitly visualize the feasible quartic roots on the complex plane.
\end{remark}

Having obtained the closed-form $\ell$th update rule, we now summarize the overall CD update framework for refining $\boldsymbol\theta$ (equivalently, $\bm\phi=[\phi_1 \cdots\phi_{M_o}]^{\mathrm{T}}$) under a fixed $\Gamma$. We first initialize $\bm\phi^{(0)}$. Then we denote the phase-refinement iteration index by $t$. At $\tau$th iteration, for each coordinate $\ell\in\{1,\cdots,M_o\}$, define
\begin{equation}
\label{eq:Y_minus_ell_recap}
\mathbf Y_{-\ell}^{(t)}
\triangleq
\mathbf H_{\rm UV}+\sum_{m\neq \ell}\phi_m^{(\tau)}\mathbf G_m,
~
\mathbf Y^{(\tau)}(\bm\phi)=\mathbf Y_{-\ell}^{(\tau)}+\phi_\ell \mathbf G_\ell,
\end{equation}
and compute $(\alpha_\ell^{(\tau)},\beta_\ell^{(\tau)})$ associated with the $\ell$th coordinate as in~\eqref{eq:L_ell_scalar_form}-\eqref{eq:alpha_beta_def}. Then, the continuous minimizer $\phi_{\ell,\rm cont}^{(\tau+)}$ is obtained by solving~\eqref{eq:quartic_u} and selecting the root on the unit circle that minimizes $\Re\{\alpha_\ell^{(\tau)}u+\beta_\ell^{(\tau)}u^2\}$. Finally, we quantize the continuous minimizer onto the finite-resolution alphabet:
\begin{equation}
\label{eq:coord_update_rule}
\phi_\ell^{(\tau+1)} \leftarrow 
\Pi_{\mathcal A}\left(\phi_{\ell,\rm cont}^{(\tau+)}\right)~ (\ell=1,\cdots,M_o).
\end{equation}
Sweeping $\ell=1,\cdots,M_o$ once constitutes one full CD-update pass. At termination, we output $\bm\phi^\star$ and set $\theta_{i_\ell}^\star=\angle \phi_\ell^\star~(\forall \ell)$. The resulting update is computationally efficient because each coordinate update involves only a scalar quartic root-finding and a single quantization operation, while preserving a non-increasing objective sequence. Furthermore, the convergence of the proposed CD-based update of $\boldsymbol\theta$ is established by the following lemma:
\begin{lemma}
\label{lem:cd_convergence_cont}
Consider the continuous-phase refinement problem
\begin{equation}
\label{eq:cont_cd_prob}
\min_{\bm\phi\in\mathbb C^{M_o}}~\mathcal L(\bm\phi)
~ \text{s.t.}~ |\phi_\ell|=1~(\ell=1,\cdots,M_o),
\end{equation}
where $\mathcal L(\bm\phi)$ is given in~\eqref{eq:L_phi_form}.
At each CD step, the $\ell$th coordinate is updated by solving the one-dimensional subproblem in~\eqref{eq:phi_cont_problem}, i.e.,
\begin{equation}
\label{eq:exact_cd_update}
\phi_\ell^{(t+1)} \in \argmin_{|\phi|=1} \mathcal L\left(\phi_1^{(t+1)},\cdots,\phi_{\ell-1}^{(t+1)},\phi,\phi_{\ell+1}^{(t)},\cdots,\phi_{M_o}^{(t)}\right),
\end{equation}
while keeping all other entries fixed. Then:
\begin{enumerate}
\item \textbf{(Monotonicity)} The objective values are non-increasing:
\begin{equation}
\label{eq:cd_monotone}
\mathcal L(\bm\phi^{(t+1)})\le \mathcal L(\bm\phi^{(t)})~(\forall t\in\mathbb N \cup \{0\}).
\end{equation}
\item \textbf{(Convergence)} $\{\mathcal L(\bm\phi^{(t)})\}$ converges to a finite limit.
\item \textbf{(Coordinate-wise stationarity of limit points)} Every limit point $\bm\phi^\star$ of $\{\bm\phi^{(t)}\}$ satisfies
\begin{equation}
\label{eq:cd_coord_opt}
\phi_\ell^\star \in \argmin_{|\phi|=1} 
\mathcal L\left(\phi_1^\star,\cdots,\phi_{\ell-1}^\star,\phi,\phi_{\ell+1}^\star,\cdots,\phi_{M_o}^\star\right)~(\forall \ell),
\end{equation}
i.e., $\bm\phi^\star$ is a coordinate-wise minimizer on the product of unit circles.
\end{enumerate}
\end{lemma}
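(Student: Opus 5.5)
The plan is the standard Gauss--Seidel block-coordinate argument on the compact feasible set $\mathcal T\triangleq\{\bm\phi:|\phi_\ell|=1,\ \forall\ell\}$, a product of $M_o$ unit circles. First I will record two facts about $\mathcal L$ from~\eqref{eq:L_phi_form}. Since $\mathbf Y(\bm\phi)=\mathbf H_{\rm UV}+\sum_\ell\phi_\ell\mathbf G_\ell$ is affine in $\bm\phi$, the function $\mathcal L$ is a real polynomial in $(\Re\phi_\ell,\Im\phi_\ell)_\ell$ and hence continuous. Moreover, $\mathcal L\ge 0$ on $\mathcal T$, because by~\eqref{eq:L_def} it equals an expected squared norm; alternatively, $|\mathcal L|$ is bounded on the compact set $\mathcal T$.

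\textbf{Monotonicity.} Write the intermediate iterates within one sweep as $\bm\phi^{(t,\ell)}\triangleq(\phi_1^{(t+1)},\cdots,\phi_\ell^{(t+1)},\phi_{\ell+1}^{(t)},\cdots,\phi_{M_o}^{(t)})$, so that $\bm\phi^{(t,0)}=\bm\phi^{(t)}$ and $\bm\phi^{(t,M_o)}=\bm\phi^{(t+1)}$. The previous value $\phi_\ell^{(t)}$ is feasible for the subproblem~\eqref{eq:exact_cd_update}, and by Remark~\ref{r11} that subproblem attains its minimum. Hence
\begin{equation*}
\mathcal L(\bm\phi^{(t,\ell)})\le\mathcal L(\bm\phi^{(t,\ell-1)}),
\end{equation*}
and chaining these inequalities over $\ell=1,\cdots,M_o$ gives~\eqref{eq:cd_monotone}.

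\textbf{Convergence.} This follows directly: the sequence $\{\mathcal L(\bm\phi^{(t)})\}$ is non-increasing and bounded below, by $0$ or by $\min_{\mathcal T}\mathcal L$, which exists by compactness and continuity. It therefore converges to some finite $\mathcal L^\star$.

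\textbf{Coordinate-wise stationarity.} This is the main obstacle, because the argmin in~\eqref{eq:exact_cd_update} need not be unique. I will proceed as follows. Let $\bm\phi^{(t_k)}\to\bm\phi^\star$ along a subsequence; by continuity, $\mathcal L(\bm\phi^\star)=\mathcal L^\star$. By compactness of $\mathcal T^{M_o+1}$, pass to a further subsequence along which every intermediate point $\bm\phi^{(t_k,\ell)}$ converges to some $\bar{\bm\phi}^{\ell}$, with $\bar{\bm\phi}^{0}=\bm\phi^\star$. All intermediate values are squeezed between $\mathcal L(\bm\phi^{(t_k)})$ and $\mathcal L(\bm\phi^{(t_k+1)})$, both of which tend to $\mathcal L^\star$, so $\mathcal L(\bar{\bm\phi}^{\ell})=\mathcal L^\star$ for every $\ell$.

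Next, the optimality of each coordinate update passes to the limit. For any $|\phi|=1$,
\begin{equation*}
\mathcal L(\bm\phi^{(t_k,\ell)})\le\mathcal L\big(\phi^{(t_k+1)}_{1:\ell-1},\phi,\phi^{(t_k)}_{\ell+1:M_o}\big),
\end{equation*}
and continuity yields $\mathcal L(\bar{\bm\phi}^{\ell})\le\mathcal L(\bar{\bm\phi}^{\ell-1}\text{ with its }\ell\text{th entry replaced by }\phi)$. For $\ell=1$, the point $\bar{\bm\phi}^{0}=\bm\phi^\star$ differs from $\bar{\bm\phi}^{1}$ only in coordinate $1$, and both have value $\mathcal L^\star$. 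Hence $\phi_1^\star$ attains the minimum of the first-coordinate subproblem at $\bm\phi^\star$, which is~\eqref{eq:cd_coord_opt} for $\ell=1$.

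For $\ell\ge2$, I need $\bar{\bm\phi}^{\ell-1}=\bm\phi^\star$, i.e., that the limits do not drift. There are two routes. The first is to assume, as is standard (Bertsekas' block-coordinate descent theorem), that each one-dimensional subproblem has a unique minimizer; then equal values force $\bar\phi^{1}_1=\phi^\star_1$, and induction over $\ell$ completes the argument. The second, which avoids that assumption, is to apply the $\ell=1$ argument to the cyclically shifted sequence $\{\bm\phi^{(t,\ell-1)}\}$, which is again an exact CD sequence with coordinate order starting at $\ell$. Its limit points $\bar{\bm\phi}^{\ell-1}$ are then coordinate-$\ell$ minimizers. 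I expect the paper to rely on the uniqueness assumption (or tie-breaking by keeping the current $\phi_\ell$ whenever it is already optimal) together with continuity, so I will state that assumption explicitly and conclude~\eqref{eq:cd_coord_opt} for all $\ell$.
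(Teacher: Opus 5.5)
Your parts 1 and 2 coincide with the paper's proof: chain the per-coordinate inequalities over a sweep, then use $\mathcal L\ge 0$ from \eqref{eq:L_def}. Your treatment of part 3 for $\ell=1$ is correct and needs no extra assumption.

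The gap is part 3 for $\ell\ge 2$. Passing the optimality of the $\ell$th update to the limit only gives $\mathcal L^\star\le\mathcal L(\bar\phi^{\ell-1}_{1},\dots,\bar\phi^{\ell-1}_{\ell-1},\phi,\phi^\star_{\ell+1},\dots,\phi^\star_{M_o})$ for all $|\phi|=1$. To get \eqref{eq:cd_coord_opt} you need $\bar\phi^{\ell-1}_m=\phi^\star_m$ for $m<\ell$, i.e., that the partially updated points have the same limit as $\bm\phi^{(t_k)}$.

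Neither of your routes supplies this:
\begin{itemize}
\item \emph{Route 2.} Applying the $\ell=1$ argument to the shifted sequence shows that $\bar{\bm\phi}^{\ell-1}$ is coordinate-$\ell$ optimal. That is a statement about a different point, and identifying $\bar{\bm\phi}^{\ell-1}$ with $\bm\phi^\star$ is precisely the missing step.
\item \emph{Route 1.} It closes the gap only by importing a hypothesis that is not in the lemma and is not automatic here. On $|\phi|=1$, $\Re\{\alpha_\ell\phi+\beta_\ell\phi^2\}$ has two antipodal minimizers when $\alpha_\ell=0\neq\beta_\ell$, and is constant when $\alpha_\ell=\beta_\ell=0$. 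Moreover, $\alpha_\ell$ moves with the other coordinates.
\item \emph{Tie-breaking.} Keeping the current $\phi_\ell$ on ties is no substitute, because the damaging non-uniqueness appears at the limit, not along the iterates. Powell's classical three-variable example of cyclic exact coordinate minimization has a unique coordinate minimizer at every iterate, yet its limit points are not coordinate-wise minimizers.
\end{itemize}
So, as written, you establish claim 3 only conditionally.

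The paper's proof does not close this hole either. It dispatches claim 3 with the single remark that, by continuity and exactness of each one-dimensional minimization, ``the inequality passes to the limit along the subsequence.'' That tacitly assumes $\bm\phi^{(t_k+1)}$ and every intermediate point also converge to $\bm\phi^\star$. Your diagnosis of this drift issue is correct and more careful than the paper's argument.

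For a rigorous result, you have two options:
\begin{itemize}
\item Add the missing hypothesis to the lemma: either unique coordinate minimizers at $\bm\phi^\star$, or $\|\bm\phi^{(t+1)}-\bm\phi^{(t)}\|\to 0$, which forces all partially updated points to share the limit.
\item Weaken claim 3 to what you actually prove: limit points of $\{\bm\phi^{(t)}\}$ are coordinate-$1$ optimal, and limit points of the intermediate sequence $\{\bm\phi^{(t,\ell-1)}\}$ are coordinate-$\ell$ optimal.
\end{itemize}
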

\begin{proof}
Fix an outer iteration index $t$ and consider the inner sweep over $\ell=1,\cdots,M_o$. By construction, the update~\eqref{eq:exact_cd_update} yields
\begin{equation}
\label{eq:cd_step_descent}
\begin{aligned}
&\mathcal L\left(\phi_1^{(t+1)},\cdots,\phi_{\ell}^{(t+1)},\phi_{\ell+1}^{(t)},\cdots,\phi_{M_o}^{(t)}\right)\\
&\le\mathcal L\left(\phi_1^{(t+1)},\cdots,\phi_{\ell-1}^{(t+1)},\phi_{\ell}^{(t)},\phi_{\ell+1}^{(t)},\cdots,\phi_{M_o}^{(t)}\right).
\end{aligned}
\end{equation}
Chaining~\eqref{eq:cd_step_descent} over $\ell=1,\cdots,M_o$ gives~\eqref{eq:cd_monotone}.

Next, note that $\mathcal L(\bm\phi)\ge 0$ for all feasible $\bm\phi$ because of~\eqref{eq:L_def}. Hence, $\{\mathcal L(\bm\phi^{(t)})\}$ is a non-increasing sequence bounded below by $0$, and therefore it converges to a finite limit, proving the convergence claim.

Finally, the feasible set $\mathcal S\triangleq\{\bm\phi\in\mathbb C^{M_o}:|\phi_\ell|=1~(\forall \ell)\}$ is compact. Thus, $\{\bm\phi^{(t)}\}\subset\mathcal S$ admits at least one limit point. Let $\bm\phi^\star$ be any limit point, and consider a subsequence $\{\bm\phi^{(t_j)}\}$ such that $\bm\phi^{(t_j)}\to \bm\phi^\star$. Because $\mathcal L(\bm\phi)$ is continuous in $\bm\phi$ and each coordinate update solves an exact one-dimensional minimization over the compact set $\{|\phi|=1\}$, the inequality in~\eqref{eq:cd_step_descent} passes to the limit along the subsequence, yielding~\eqref{eq:cd_coord_opt} for every $\ell$. This proves the third claim.
\end{proof}
We further verify numerically that the proposed CD update yields a monotonically decreasing objective sequence in Section~\ref{rpf}, confirming its descent property and validating the theoretical convergence behavior.

\subsection{Overall AO Framework}
\label{subsec:overall_ao}
We now summarize the overall AO procedure for solving~\eqref{prob:fris_quad_leakage}. Starting from a feasible initialization $(\Gamma^{(0)},\boldsymbol\theta^{(0)},\mathbf w^{(0)})$ and AO iteration index $t$, the proposed method alternates among: (i) a closed-form update of $\mathbf w^{(t)}$ via~\eqref{prob:w_update_real}; (ii) a CEM-based update of $\Gamma^{(t)}$; and (iii) a CD update of $\boldsymbol\theta^{(t)}$. The resulting objective sequence is non-increasing across AO iterations, and the algorithm terminates when the relative objective improvement falls below a prescribed tolerance.

\begin{algorithm}[t]
\caption{Proposed AO Framework}
\label{alg:AO_FRIS_RAR}
\begin{algorithmic}[1]
\Require $\mathbf H_{\rm UR},\mathbf H_{\rm RV},\mathbf H_{\rm UV}$, $P$, $M_o$, $\mathcal Q$, $(K,K_{\rm CEM},\rho,\alpha)$, phase-refinement passes $T_\theta$, tolerance $\varepsilon$.

\State \textbf{Initialization:} Choose any feasible $\Gamma^{(0)}$ with $|\Gamma^{(0)}|=M_o$, $\boldsymbol\theta^{(0)}$ with $\theta_{i_\ell}^{(0)}\in\mathcal Q$, and $\mathbf w^{(0)}$ with $\|\mathbf w^{(0)}\|_2^2=P$.
\State Evaluate $\mathcal L^{(0)}$ via~\eqref{eq:L_final}.

\While{$|\mathcal L^{(t)}-\mathcal L^{(t+1)}|\ge\varepsilon$}

  \Statex \textbf{1) Update $\mathbf w$ for fixed $(\Gamma^{(t)},\boldsymbol\theta^{(t)})$:}
  \State Form $\mathbf G$ by~\eqref{eq:G_explicit} and set $\tilde{\mathbf w}^{(t+1)}\leftarrow \sqrt{P} \mathbf v_{\min}(\mathbf G)$.
  \State Recover $\mathbf w^{(t+1)}$ by $\tilde{\mathbf w}^{(t+1)}$.

  \Statex \textbf{2) Update $\Gamma$ for fixed $(\boldsymbol\theta^{(t)},\mathbf w^{(t+1)})$:}
  \State Initialize uniform $\mathbf p^{(0)}$.
  \For{$k=1,\cdots,K_{\rm CEM}$}
    \State \multiline{Sample $K$ sets $\{\Gamma_i^{(k)}\}_{i=1}^K$ of size $M_o$ without replacement according to $\mathbf p^{(k-1)}$.}
    \State Evaluate $f(\Gamma_i^{(k)})=\mathcal L(\Gamma_i^{(k)},\boldsymbol\theta^{(t)},\mathbf w^{(t+1)})$ by~\eqref{eq:L_final}.
    \State Determine $\mathcal E^{(k)}$ by~\eqref{eq:elite_set}.
    \State Update $\widehat{\mathbf p}^{(k)}$ by~\eqref{eq:pmf_update_raw} and smooth $\mathbf p^{(k)}$ by~\eqref{eq:pmf_smooth}.
  \EndFor
  \State Set $\Gamma^{(t+1)}\leftarrow \mathrm{Top}_{M_o}(\mathbf p^{(K_{\rm CEM})})$ as in~\eqref{eq:Gamma_from_p}.

  \Statex \textbf{3) Update $\boldsymbol\theta$ for fixed $(\Gamma^{(t+1)},\mathbf w^{(t+1)})$:}
  \State Initialize $\bm\phi^{(0)}\leftarrow [e^{j\theta_{i_1}^{(t)}}~\cdots~e^{j\theta_{i_{M_o}}^{(t)}}]^{\mathrm T}$ over $\Gamma^{(t+1)}$.
  \For{$\tau=0,1,\cdots,T_\theta-1$}
    \For{$\ell=1,2,\cdots,M_o$}
      \State \multiline{Form $\mathbf Y_{-\ell}$ by~\eqref{eq:Y_minus_ell} and set $(\alpha_\ell,\beta_\ell)$ by~\eqref{eq:alpha_beta_def}.}
      \State Solve~\eqref{eq:quartic_u} and form $\mathcal U_\ell$ per~\eqref{eq:U_set}.
      \State Choose $u_\ell$ and set $\phi_{\ell,\rm cont}$ by~\eqref{eq:u_star}.
      \State Quantize $\phi_\ell\leftarrow \Pi_{\mathcal A}(\phi_{\ell,\rm cont})$ via~\eqref{eq:phi_quantize}.
    \EndFor
  \EndFor
  \State Set $\boldsymbol\theta^{(t+1)}\leftarrow \angle\bm\phi$.

  \State Compute $\mathcal L^{(t+1)}=\mathcal L(\Gamma^{(t+1)},\boldsymbol\theta^{(t+1)},\mathbf w^{(t+1)})$ via~\eqref{eq:L_final}.
 \State $t\leftarrow t+1$
\EndWhile
\State \Return $(\Gamma^\star,\boldsymbol\theta^\star,\mathbf w^\star)\leftarrow (\Gamma^{(t+1)},\boldsymbol\theta^{(t+1)},\mathbf w^{(t+1)})$.
\end{algorithmic}
\end{algorithm}

\subsection{Signal Detection}
\label{subsec:signal_detection}
After obtaining $(\Gamma^\star,\boldsymbol\theta^\star,\mathbf w^\star)$, the effective channel becomes fixed as $
\mathbf H_{\rm eq}^\star
\triangleq
\mathbf H_{\rm eq}(\Gamma^\star,\boldsymbol\theta^\star)$. The atomic receiver measurement is then expressed as in~\eqref{prpr}:
\begin{equation}
\label{cab}
\mathbf y-|\mathbf r|
\approx
\mathbf g^\star s+\widetilde{\mathbf n},
\end{equation}
where $\mathbf g^\star
=
\mathbf D_r \mathbf H_{\rm eq}^\star \mathbf w^\star
\in\mathbb R^{N_r\times1}$, since the proposed leakage-aware design suppresses the residual quadrature component: $\Im\left\{\mathbf D_r \mathbf H_{\rm eq}^\star \mathbf w^\star s\right\}\approx \mathbf 0$, which makes the reference-aligned signal nearly real-valued. 

Accordingly, symbol detection reduces to a one-dimensional least-squares (LS) problem
\begin{equation}
\label{eq:ls_scalar_detect}
\widehat{s}
=
\arg\min_{s\in\mathcal S}
\left\|
(\mathbf y-|\mathbf r|)-\mathbf g^\star s
\right\|_2^2,
\end{equation}
where $\mathcal S$ denotes the symbol constellation. An equivalent implementation first computes the unconstrained LS estimate
\begin{equation}
\label{eq:s_hat_unconstrained}
\widetilde{s}
=
\frac{\mathbf g^{\star\mathrm{T}}(\mathbf y-|\mathbf r|)}
{\|\mathbf g^\star\|_2^2},
\end{equation}
and then determines the final decision by nearest-neighbor slicing,
\begin{equation}
\label{eq:slicer_detect}
\widehat{s}
=
\arg\min_{s\in\mathcal S}
|\widetilde{s}-s|^2.
\end{equation}


\section{Computational Complexity}
\label{sec:complexity}
The computational complexity of the proposed framework can be decomposed into updating each variables and signal detection:
\subsection{Update of $\mathbf w$}
The update of $\mathbf w$ requires forming $\mathbf R=\mathbf H^{*}\mathbf H\in\mathbb C^{N_t\times N_t}$ and $\mathbf B=\mathbf H^{\mathrm T}\mathbf D\mathbf H\in\mathbb C^{N_t\times N_t}$. Computing $\mathbf R$ and $\mathbf B$ both cost $\mathcal O(N_r N_t^2)$ since $\mathbf D$ is diagonal. Forming the real-augmented matrix $\mathbf G\in\mathbb R^{2N_t\times 2N_t}$ is $\mathcal O(N_t^2)$, while obtaining $\mathbf v_{\min}(\mathbf G)$ via an eigen-decomposition of a $(2N_t)\times(2N_t)$ symmetric matrix costs $\mathcal O(N_t^3)$. Therefore, the beamformer-update complexity per AO iteration is
\begin{equation}
\label{eq:complex_w}
\mathcal O\left(N_r N_t^2 + N_t^3\right).
\end{equation}

\subsection{Update of $\Gamma$}
At each CEM iteration, we sample $K$ candidate sets of size $M_o$ without replacement according to $\mathbf p^{(k-1)}$ and evaluate $f(\Gamma)$ for each sampled set. The dominant cost comes from evaluating~\eqref{eq:L_final}, which requires forming $\mathbf H_{\rm eq}(\Gamma,\boldsymbol\theta)=\mathbf H_{\rm RV}\mathbf S_\Gamma\mathbf \Phi \mathbf S_\Gamma^{\mathrm T}\mathbf H_{\rm UR}+\mathbf H_{\rm UV}$. Using $\mathbf H_{\rm RV}\mathbf S_\Gamma\in\mathbb C^{N_r\times M_o}$ and $\mathbf S_\Gamma^{\mathrm T}\mathbf H_{\rm UR}\in\mathbb C^{M_o\times N_t}$, $\mathbf H_{\rm eq}(\Gamma,\boldsymbol\theta)$ can be formed as $(\mathbf H_{\rm RV}\mathbf S_\Gamma)\mathbf \Phi (\mathbf S_\Gamma^{\mathrm T}\mathbf H_{\rm UR})$ with complexity $\mathcal O(N_r M_o N_t)$. Given $\mathbf H_{\rm eq}$, evaluating~\eqref{eq:L_final} further requires computing $\mathbf H_{\rm eq}\mathbf Q\mathbf H_{\rm eq}^{*}$ and $\mathbf H_{\rm eq}\mathbf P\mathbf H_{\rm eq}^{\mathrm T}$ traces. Since $\mathbf Q=\mathbf w\mathbf w^{*}$ and $\mathbf P=\kappa\mathbf w\mathbf w^{\mathrm T}$ are rank-one, these traces can be computed via vector products (e.g., $\|\mathbf H_{\rm eq}\mathbf w\|_2^2$) in $\mathcal O(N_r N_t)$. Hence, one objective evaluation costs $\mathcal O\left(N_r M_o N_t + N_r N_t\right)=\mathcal O\left(N_r M_o N_t\right)$. Sorting $K$ objective values costs $\mathcal O(K\log K)$, and updating the PMF using~\eqref{eq:pmf_update_raw} costs $\mathcal O(K M_o)$ by counting port occurrences among elite samples. Therefore, the overall complexity of the CEM update per AO iteration is
\begin{equation}
\label{eq:complex_cem}
\mathcal O\left( K_{\rm CEM}\Big(KN_r M_o N_t + K\log K + K M_o\Big)\right).
\end{equation}

\subsection{Update of $\boldsymbol\theta$}
For fixed $(\Gamma,\mathbf w)$, the CD procedure performs sequential updates over $\ell=1,\cdots,M_o$. Each coordinate update requires evaluating $(\alpha_\ell,\beta_\ell)$ in~\eqref{eq:alpha_beta_def} and solving~\eqref{eq:quartic_u}, whose root-finding cost is constant. The main computational burden arises from evaluating the trace expressions defining $(\alpha_\ell,\beta_\ell)$, which involve $\mathbf G_\ell\in\mathbb C^{N_r\times N_t}$ and $\mathbf Y_{-\ell}$. Again by exploiting the rank-one structures $\mathbf Q$, $\mathbf P$, and $\mathbf G_\ell=\mathbf a_\ell\mathbf b_\ell^{\mathrm T}$ with $\mathbf a_\ell=\mathbf H_{\rm RV}\mathbf e_{i_\ell}$ and $\mathbf b_\ell^{\mathrm T}=\mathbf e_{i_\ell}^{\mathrm T}\mathbf H_{\rm UR}$, the trace terms can be expressed in terms of vector inner products. Furthermore, maintaining the auxiliary vector $\mathbf y_v\triangleq \mathbf Y(\bm\phi)\mathbf w$ enables efficient reuse across coordinates via the relation $\mathbf Y_{-\ell}\mathbf w=\mathbf y_v-\phi_\ell(\mathbf G_\ell\mathbf w)$, thereby reducing each coordinate update to $\mathcal O(N_r+N_t)$ arithmetic after an $\mathcal O(N_rN_t)$ preprocessing step per CD pass to compute $\mathbf y_v$. Therefore, the overall computational complexity of the phase refinement stage with $T_\theta$ CD passes is given by
\begin{equation}
\label{eq:complexity_theta}
\mathcal O\left(T_\theta\big(N_rN_t+M_o(N_r+N_t)\big)\right).
\end{equation}

\subsection{Signal Detection}
\label{subsubsec:detection_complexity}
Based on the linearized detection model in~\eqref{eq:ls_scalar_detect}-\eqref{eq:s_hat_unconstrained}, symbol detection only requires computing the scalar least-squares estimate in~\eqref{eq:s_hat_unconstrained}. First, computing $\mathbf z=\mathbf y-|\mathbf r|$ requires $\mathcal O(N_r)$ operations. The numerator $\mathbf g^{\star\mathrm T}\mathbf z$ is a vector inner product, which also requires $\mathcal O(N_r)$ operations, while the denominator $\|\mathbf g^\star\|_2^2$ can be precomputed once after the optimization stage with complexity $\mathcal O(N_r)$. The final nearest-neighbor slicing step in~\eqref{eq:slicer_detect} requires $\mathcal O(|\mathcal S|)$ operations. Therefore, the overall detection complexity per symbol scales as $\mathcal O(N_r+|\mathcal S|)$.

\subsection{Overall Complexity}
Combining~\eqref{eq:complex_w}-\eqref{eq:complexity_theta}, the total complexity of the proposed AO algorithm with $T_{\rm AO}$ AO iterations scales as
\begin{equation}
\label{eq:complex_total}
\begin{aligned}
\mathcal O\Big(T_{\rm AO}\Big[&N_r N_t^2 + N_t^3+K_{\rm CEM}\big(KN_r M_o N_t + K\log K\big)\\
&+T_\theta (N_rN_t+M_o(N_r+N_t))+|\mathcal S|\Big]\Big).
\end{aligned}
\end{equation}

\section{Simulation Results}
\subsection{Parameter Setup}
We first define the received SNR as
\begin{equation}
\label{resnr}
\mathrm{SNR}
\triangleq
\frac{\mathbb{E}\left[\big\|\mathbf H_{\mathrm{eq}}\mathbf x\big\|_2^2\right]}
{\mathbb{E}\left[\|\mathbf n\|_2^2\right]}.
\end{equation}
To characterize the relative strength of the reference signal, we further introduce the reference-to-signal ratio (RSR)~\cite{Precoding_atomicMIMO}, defined as
\begin{equation}
\label{rsrdef}
\mathrm{RSR}
\triangleq
\frac{\mathbb{E}\left[\|\mathbf r\|_2^2\right]}
{\mathbb{E}\left[\big\|\mathbf H_{\mathrm{eq}}\mathbf x+\mathbf n\big\|_2^2\right]},
\end{equation}
which measures the reference signal power relative to the combined received communication signal and noise.
\begin{figure}[t]
  \begin{center}
    \includegraphics[width=0.8\columnwidth,keepaspectratio]{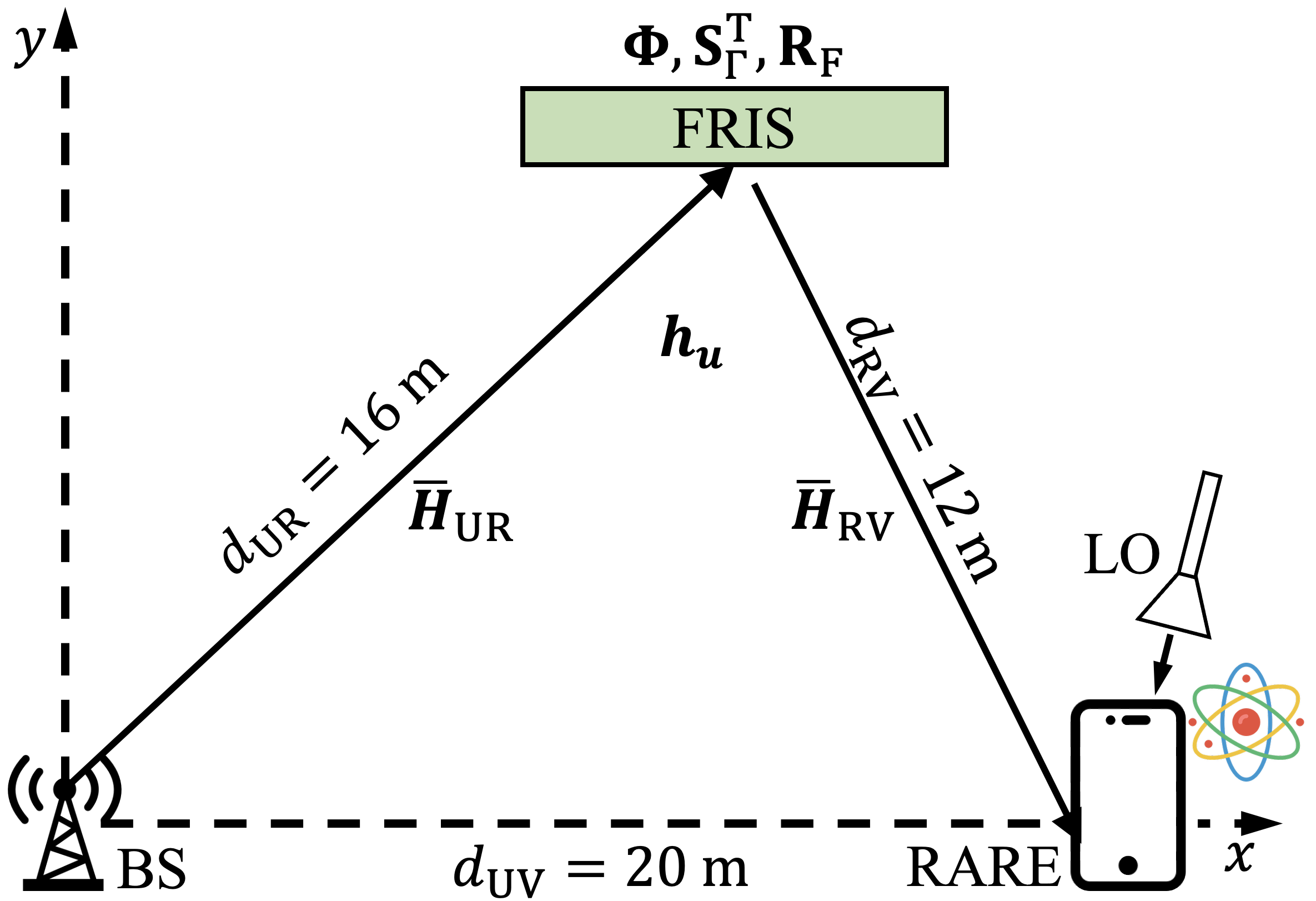}
    \caption{Simulation setup of the proposed system.}
    \label{fig_simset}
  \end{center}
\end{figure}

Unless otherwise specified, all simulations employ the parameter settings listed in Table~\ref{tabsim}. The distances of the BS-FRIS, FRIS-RARE, and BS-RARE links are set to $d_{\mathrm{UR}}=16$~m, $d_{\mathrm{RV}}=12$~m, and $d_{\mathrm{UV}}=20$~m, respectively, which is depicted in Fig.~\ref{fig_simset}. For the RARE configuration, the Rydberg states $52D_{5/2}$ and $53P_{3/2}$ are used to detect signals at $f=5$~GHz. Based on~\cite{rydpar}, the RF transition dipole moment between these states is given by $\boldsymbol{\mu}_{\rm RF}=[0,1785.916qa_0, 0]^{\mathrm T}$, where $a_0=5.292\times10^{-11}$~m denotes the Bohr radius and $q=1.602\times10^{-19}$~C represents the elementary charge. The polarization vectors ${\boldsymbol{\epsilon}_{m,u,\ell}}$, ${\boldsymbol{\epsilon}_{m,n,\ell}}$, and ${\boldsymbol{\epsilon}_{m,t,\ell}}$ are randomly generated on unit circles orthogonal to their respective propagation directions. Each simulation result is averaged over $10^3$ independent Monte Carlo realizations. For the LO-RARE link, due to the sufficiently small separation between the LO and RAR, we approximate $\{\rho_{b,m}\}$ by a common amplitude $\rho_b$, while modeling $\{\phi_{b,m}\}$ as uniformly distributed over $[0,2\pi)$.

\begin{table}[t]
\centering
\caption{Simulation Parameters}
\label{tabsim}
\begin{tabular}{l c}
\toprule
\textbf{Parameter} & \textbf{Value} \\
\midrule
Number of RARE elements $N_r$ & 36 \\
Number of BS antennas $N_t$ & 16 \\
Received SNR in~\eqref{resnr} & 7~dB \\
Number of FRIS elements $(N, M_o)$ & (36, 9) \\
Normalized FRIS aperture $W_x$ & 2 \\
RSR in~\eqref{rsrdef} & 10~dB \\
Rician $K$-factor $\kappa$ & 2 \\
Modulation scheme (unless referred) & 4-QAM\\
\bottomrule
\end{tabular}
\end{table}

To evaluate the effectiveness of the proposed scheme, we consider the following benchmark methods.
\begin{itemize}
\item \textbf{ZF (Known Phase)}: An ideal reference case where the receiver has prior knowledge of the phase of $\mathbf y$
and directly performs zero-forcing (ZF) detection for demodulation (equivalent to zero-quadrature condition in~\eqref{prpr}).
\item \textbf{LS-Based Exhaustive Search}: An exhaustive search-based detection algorithm is applied after applying the proposed AO framework based on~\cite{atomicjsac}:
\begin{equation}
\label{Jsacls}
\min_{s} \left\|\mathbf y-|\mathbf H_{\rm eq}\mathbf x+\mathbf r|
\right\|_2^2.
\end{equation}
\item \textbf{RIS-Assisted Atomic-MIMO Benchmark}: Algorithm in~\cite{RIS_atomic_MIMO} with $M_o$-element conventional RIS.
\end{itemize}


\begin{figure}[t]
    \centering
    \subfloat[]{%
        \includegraphics[width=0.4\textwidth]{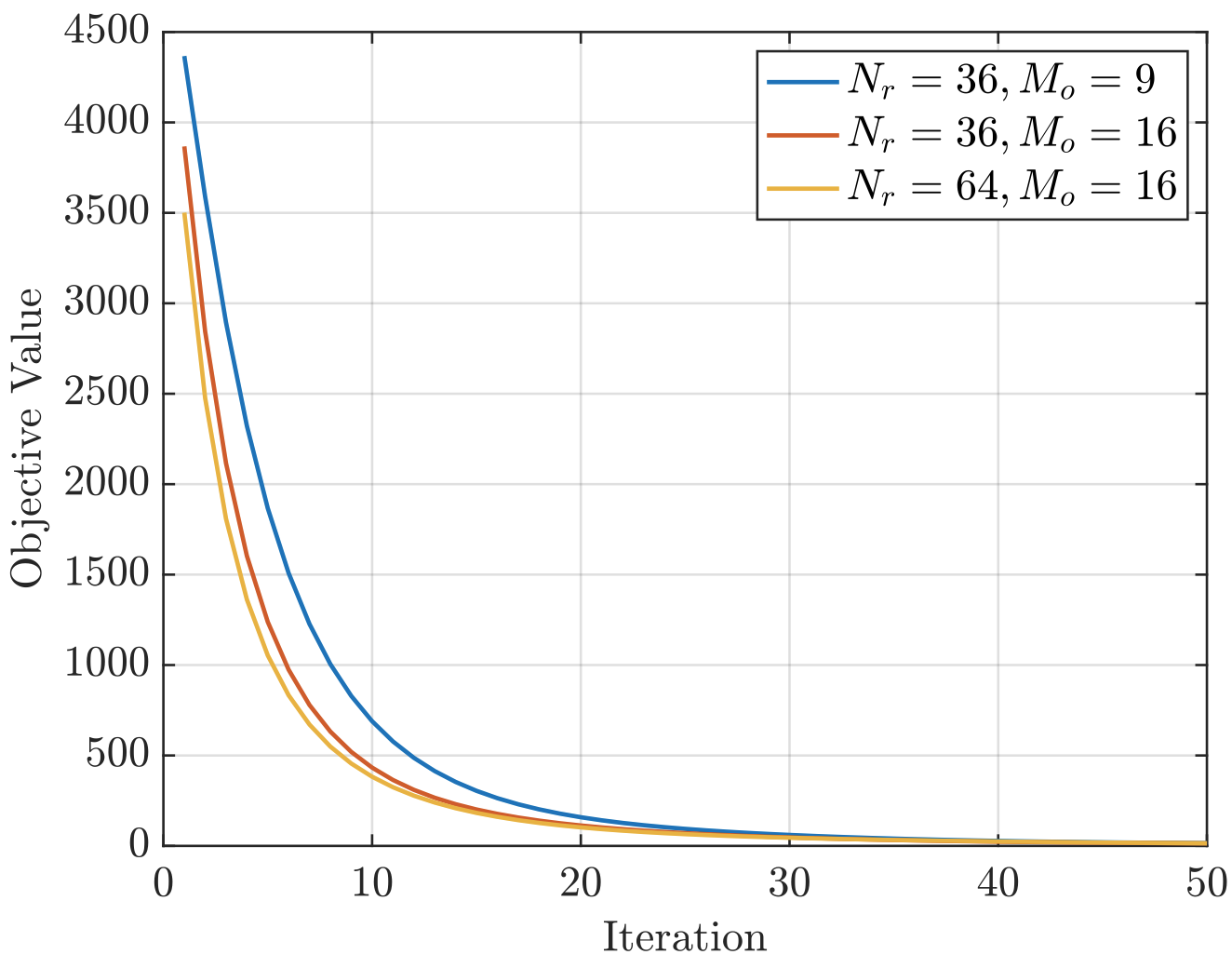}
    \label{fig_conv}
    }
    \vfill
    \subfloat[]{%
        \includegraphics[width=0.4\textwidth]{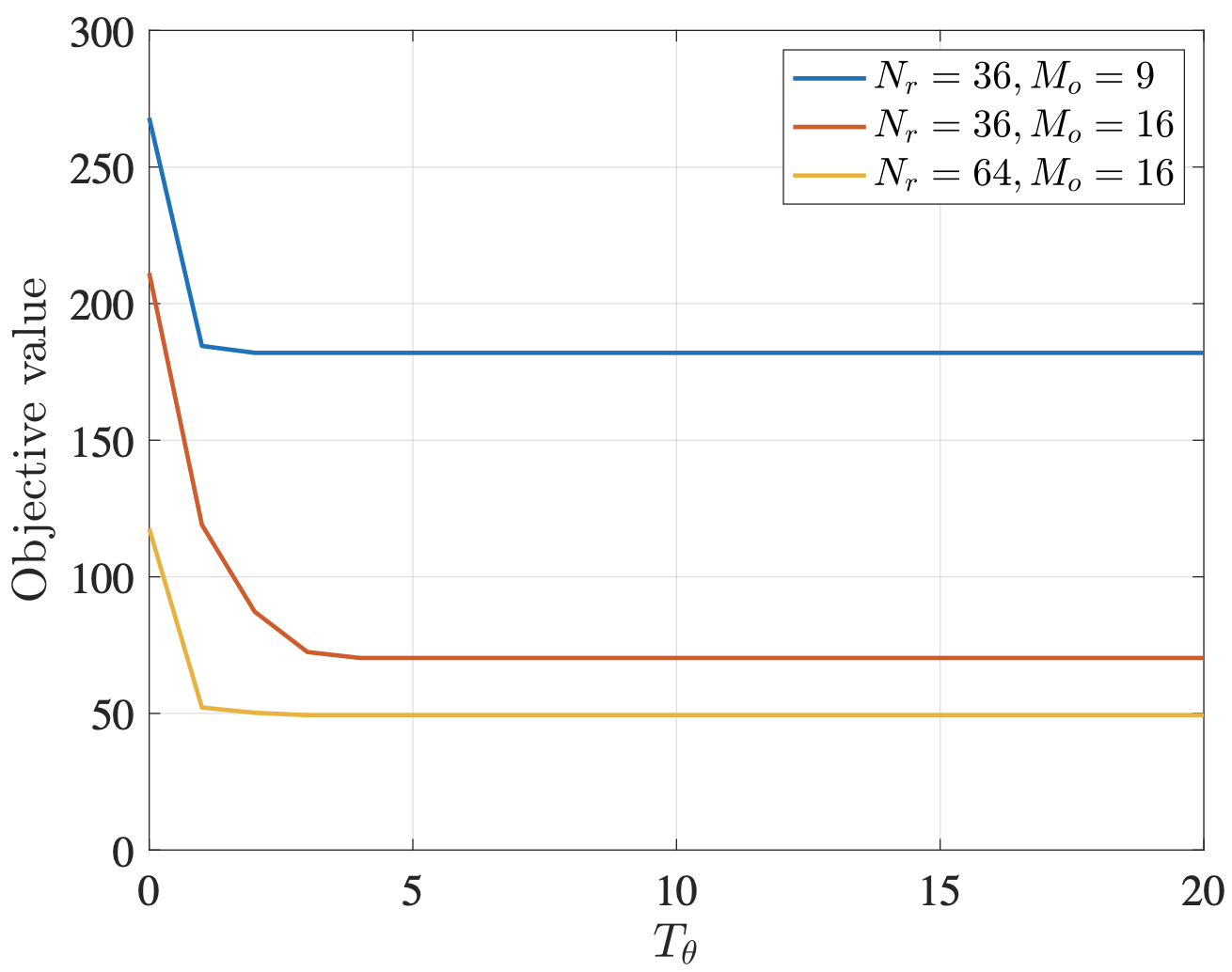}
        \label{fig_cdconv}%
    }
    \caption{Convergence of objective function under (a) overall AO framework (b) inner framework based on CD.}
    \label{fig_0}
\end{figure}
\subsection{Reliability of Proposed Framework}
\label{rpf}
To validate the convergence of the proposed AO framework, we plot $\mathcal L$ in Fig.~\ref{fig_conv} with respect to outer iteration under different $N_r$ and $M_o$. As observed, the proposed algorithm converges rapidly and stably within approximately 20-30 iterations in all cases, demonstrating its effectiveness and robustness. In addition, the convergence behavior of the inner CD phase-refinement procedure is illustrated in Fig.~\ref{fig_cdconv}.\footnote{Since the update of $\mathbf w$ admits a closed-form eigenvector solution and the CEM-based port selection is known to converge effectively by progressively concentrating the sampling PMF on high-quality configurations~\cite{CEM}, it is sufficient to examine only the convergence behavior of the CD-based phase refinement.} Herein, the objective value decreases monotonically with respect to $T_\theta$, and converges within only a few CD passes. This confirms the effectiveness of the proposed closed-form coordinate update derived from the quartic optimality condition, as well as the stability of the discrete phase quantization step. Moreover, the convergence speed remains fast across different system dimensions in both figures, demonstrating that the proposed framework efficiently suppresses quadrature leakage.

\begin{figure}[t]
  \begin{center}
    \includegraphics[width=0.8\columnwidth,keepaspectratio]{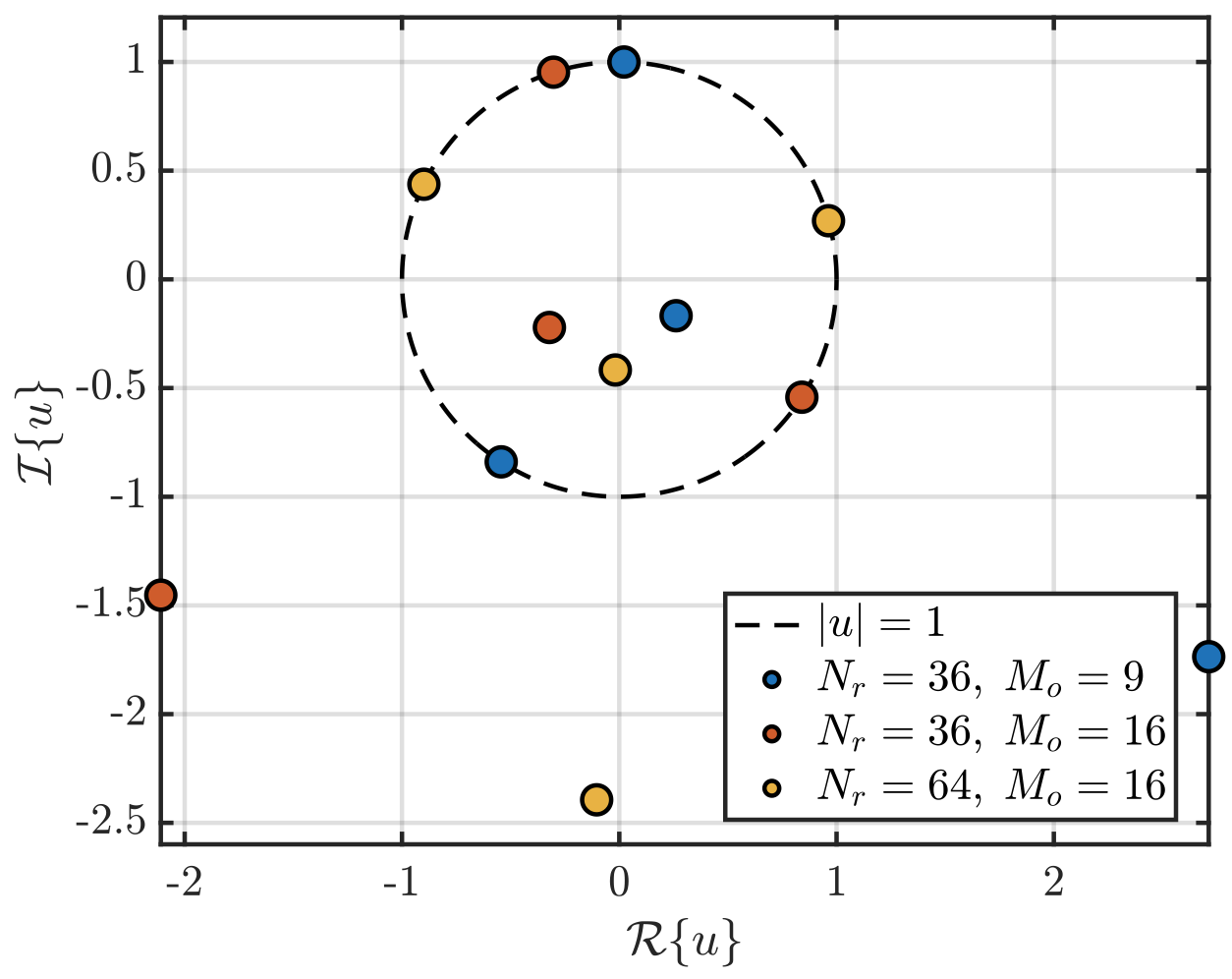}
    \caption{Quartic roots of~\eqref{eq:quartic_u} on the complex plane. The dashed circle denotes $|u|=1$, while the markers indicate the solutions of the quartic optimality condition.}
    \label{fig_quar}
  \end{center}
\end{figure}

Fig.~\ref{fig_quar} illustrates the roots of~\eqref{eq:quartic_u} on the complex plane for a representative channel realization under different $N_r$ and $M_o$. As observed, the quartic equation yields four complex roots in general, but only those located on the unit circle satisfy the feasibility condition. These unit-modulus roots correspond to the stationary points of~\eqref{eq:phi_cont_problem}, while the remaining roots outside the unit circle are infeasible and thus discarded. This result provides direct empirical evidence of the existence of at least one unit-modulus root, (i.e., $\mathcal U_\ell\neq\emptyset$), which is consistent with the theoretical existence guarantee established in Remark~\ref{r11}.

\begin{figure}[t]
    \centering
    \subfloat[4-QAM, $N_r=36$]{%
        \includegraphics[width=0.4\textwidth]{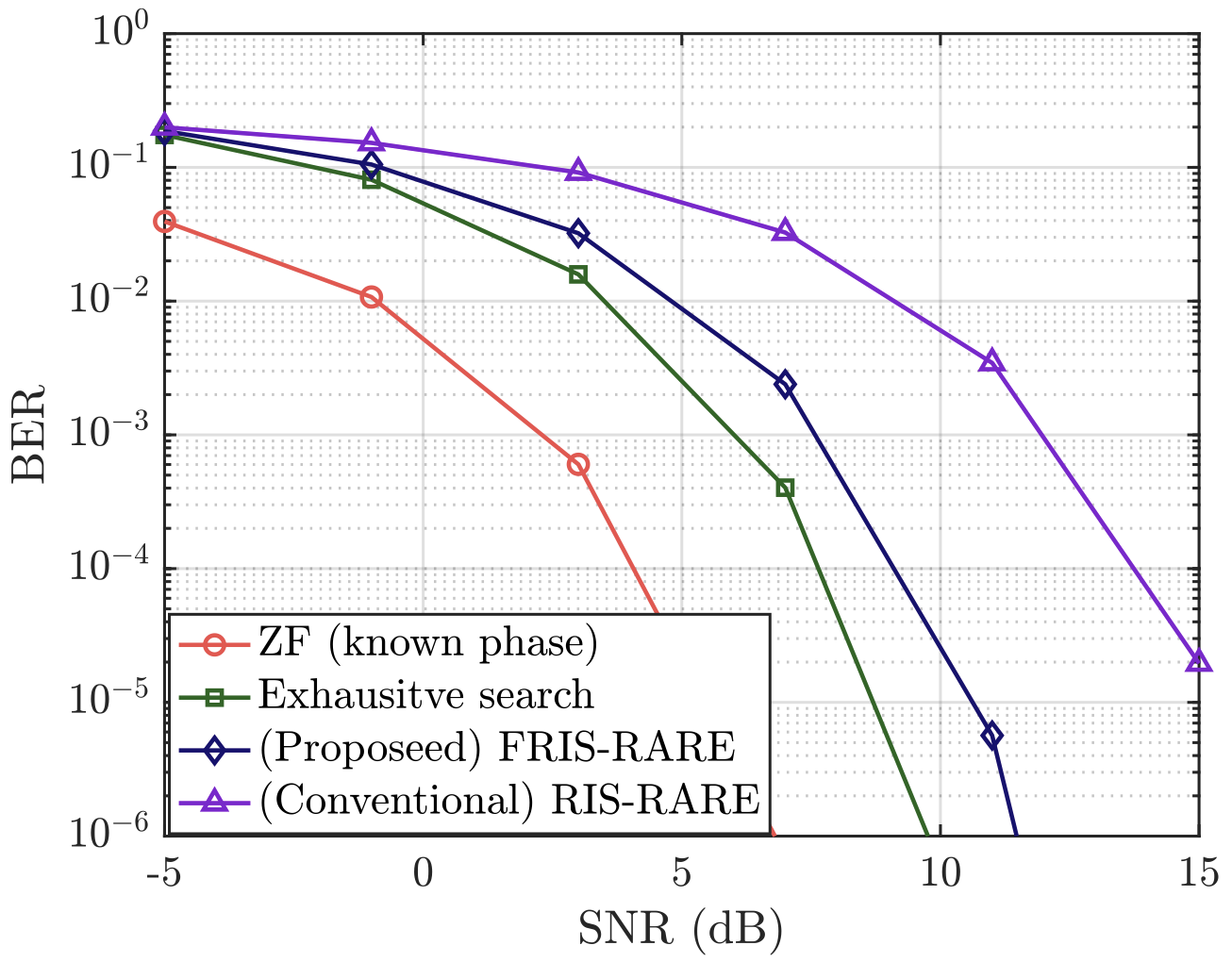}
        \label{fig_436}%
    }
    \vfill
    \subfloat[4-QAM, $N_r=64$]{%
        \includegraphics[width=0.4\textwidth]{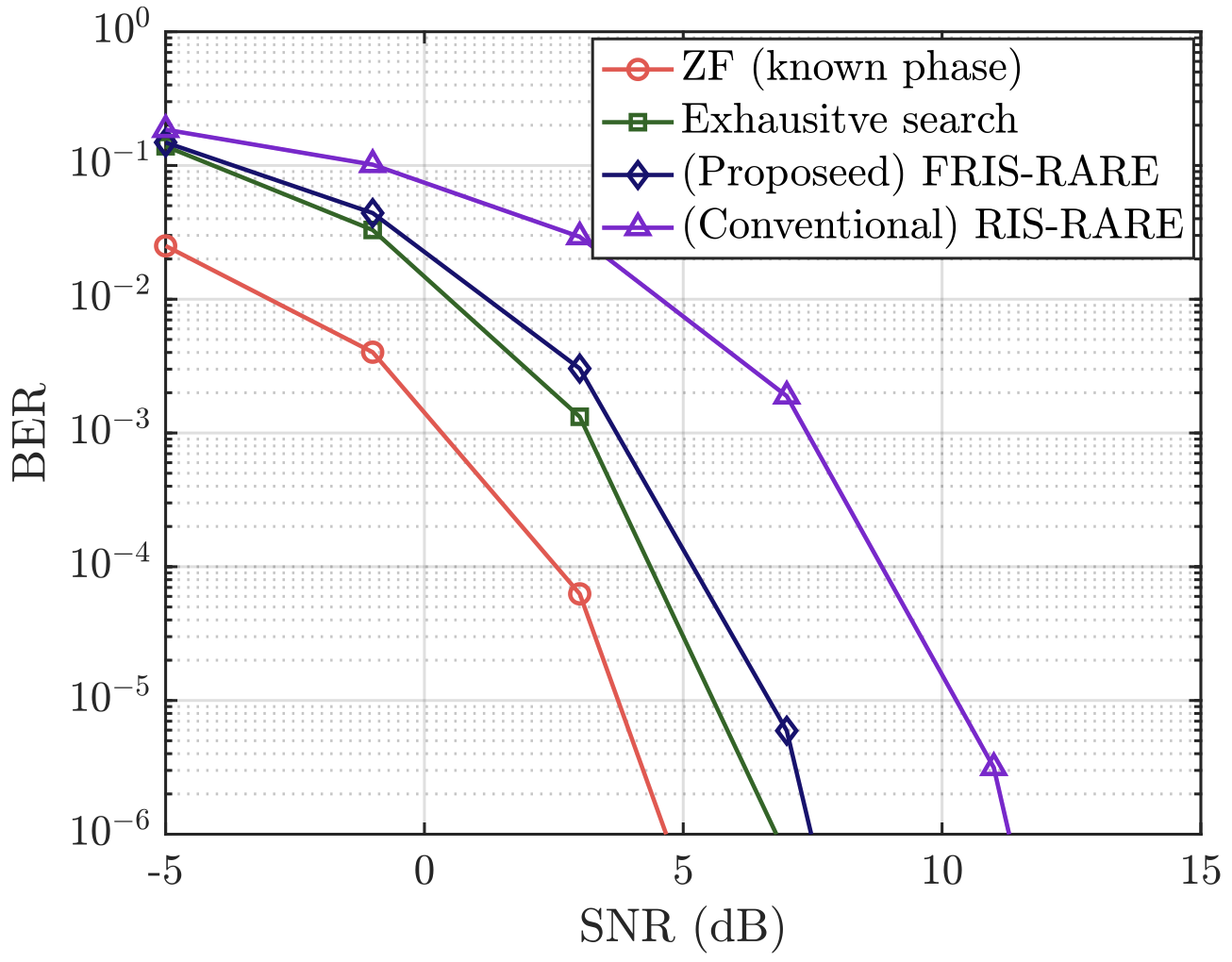}
        \label{fig_464}%
    }
    \vfill
     \subfloat[16-QAM, $N_r=64$]{%
        \includegraphics[width=0.4\textwidth]{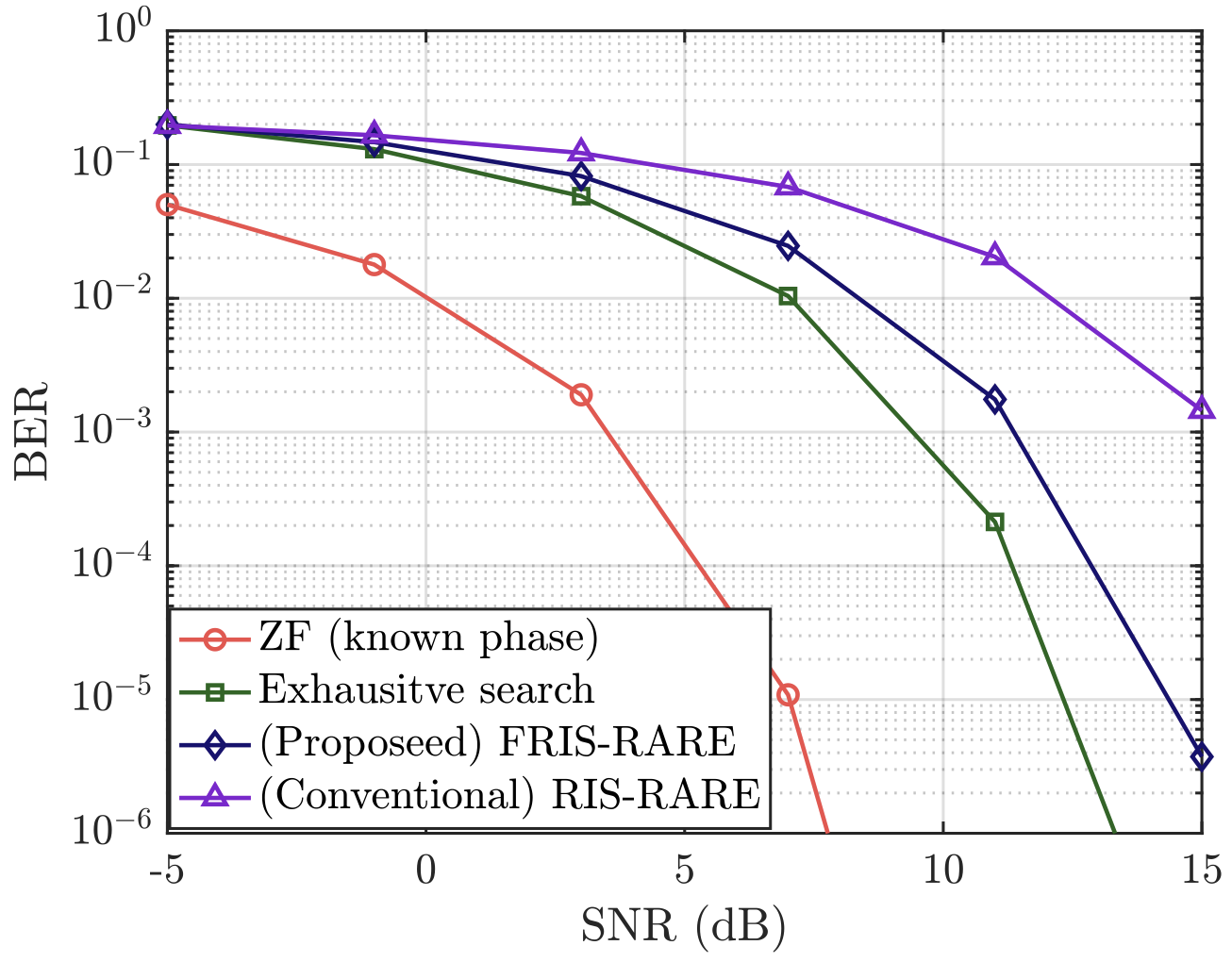}
        \label{fig_1664}%
    }
    \caption{The effect of $\mathrm{SNR}$ on the BER performance under (a) 4-QAM, $N_r=36$, (b) 4-QAM, $N_r=64$, (c) 16-QAM, $N_r=64$.}
    \label{fig_1}
\end{figure}

\subsection{Bit-Error-Rate (BER) Comparisons under Several Effects}
\label{beco}
As illustrated in Fig.~\ref{fig_1}, we evaluate the bit-error-rate (BER) performance as a function of $\mathrm{SNR}$ for different modulation orders and $N_r$. It can be observed that the proposed AO framework with FRIS-RARE consistently outperforms the conventional RIS-RARE scheme across all considered configurations, highlighting its enhanced capability to suppress quadrature leakage and improve detection reliability by exploiting additional spatial DoF; in particular, it achieves approximately 3-5~dB SNR gain over the conventional RIS-RARE scheme at the same target BER. Furthermore, the proposed AO framework achieves performance close to that of the exhaustive search method while incurring substantially lower computational complexity. Although a performance gap remains compared to the ideal ZF detector with known phase, the proposed framework exhibits a similar BER trend despite operating without explicit phase information. This result demonstrates the effectiveness of the proposed FRIS-assisted environment in enabling reliable atomic-MIMO detection.

Herein, comparing Fig.~\ref{fig_436} and Fig.~\ref{fig_464}, it can be observed that increasing $N_r$ from 36 to 64 significantly improves the BER performance across all schemes; for the proposed framework, in particular, it achieves approximately 4~dB SNR gain at the same target BER. This improvement arises because additional RARE elements provide more independent spatial observations, which enhance the effective combining gain and increase the received signal energy relative to noise and residual leakage. As a result, the estimation error variance is reduced, leading to more reliable symbol detection and lower BER. Furthermore, comparing Fig.~\ref{fig_464} and Fig.~\ref{fig_1664}, it is evident that increasing the modulation order from 4 to 16 degrades the BER performance under same SNR, which confirms the expected trade-off between spectral efficiency and detection reliability in atomic-MIMO systems.

\begin{figure}[t]
  \begin{center}
    \includegraphics[width=0.8\columnwidth,keepaspectratio]{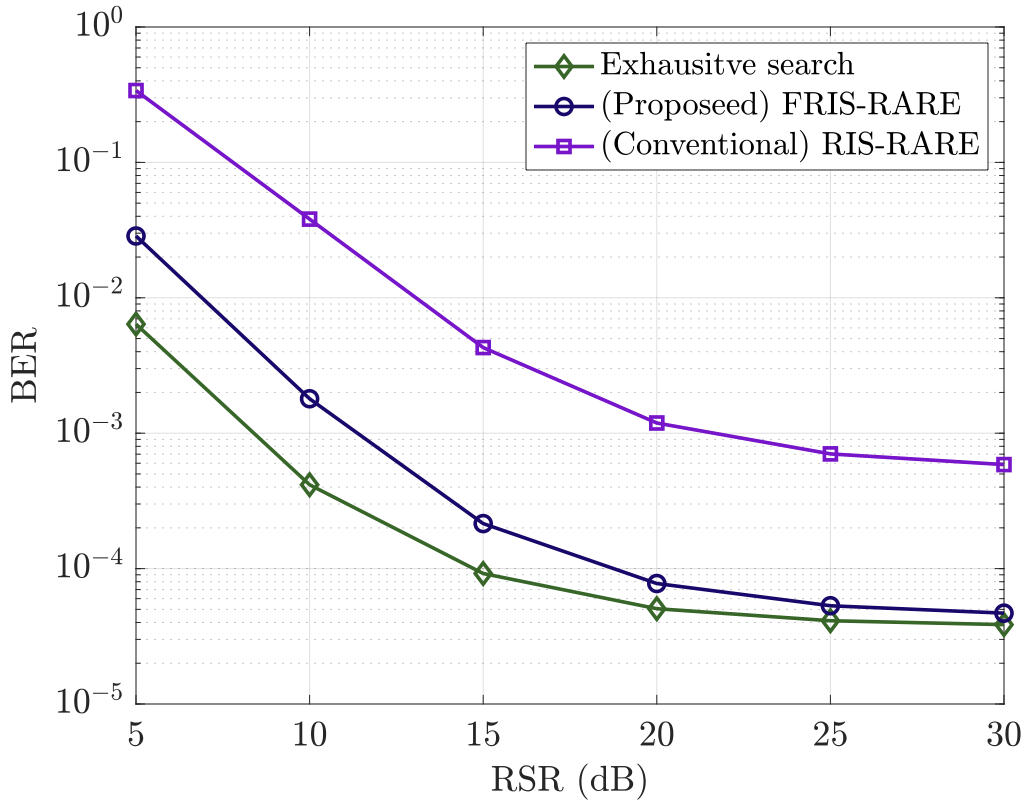}
    \caption{The effect of $\mathrm{RSR}$ on the BER performance.}
    \label{fig_rsr}
  \end{center}
\end{figure}

The impact of $\mathrm{RSR}$ on the BER performance is illustrated in Fig.~\ref{fig_rsr}. Following a similar trend as in Fig.~\ref{fig_1}, the proposed AO framework with FRIS-RARE consistently outperforms the conventional RIS-RARE scheme. In particular, both the proposed framework and the exhaustive search method exhibit substantial BER reduction as the reference signal becomes stronger, where the difference becomes smaller as $\mathrm{RSR}$ increases. This behavior confirms that a sufficiently strong reference signal is essential for reliable detection in RARE, as it effectively mitigates phase ambiguity inherent in magnitude-only observations. Overall, these results highlight the importance of reference signal strength in enabling robust atomic-MIMO detection, while further demonstrating the superiority of the proposed framework.
\begin{figure}[t]
  \begin{center}
    \includegraphics[width=0.8\columnwidth,keepaspectratio]{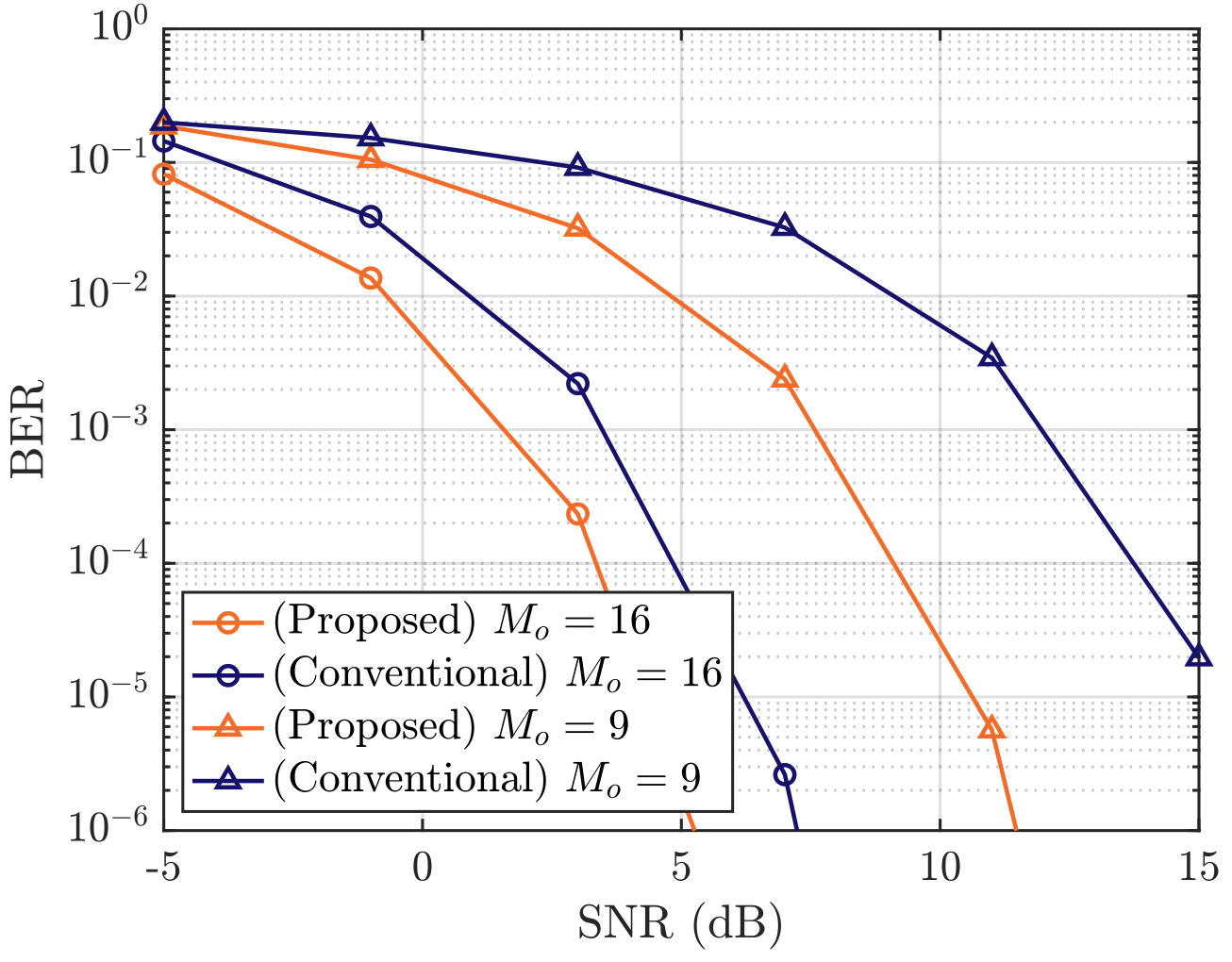}
    \caption{The effect of $\mathrm{SNR}$ on the BER performance under different $M_o$.}
    \label{fig_mo}
  \end{center}
\end{figure}

The effect of $M_o$ on the BER performance is illustrated in Fig.~\ref{fig_mo}. It can be observed that increasing $M_o$ significantly improves the BER performance for both the proposed AO framework with FRIS-RARE and the conventional RIS-RARE scheme; in particular, for the proposed framework, it achieves approximately 4-5~dB SNR gain at the same target BER. This improvement stems from the increased spatial DoF provided by the FRIS, which enable more effective shaping of the equivalent channel and enhanced suppression of quadrature leakage. In particular, the proposed framework consistently outperforms the conventional RIS-RARE for the same $M_o$, highlighting the advantage of the proposed AO-based joint optimization framework in fully exploiting the available FRIS DoF. Moreover, the performance gap between different $M_o$ configurations demonstrates that selecting more FRIS ports provides greater flexibility in controlling the propagation environment, thereby significantly improving detection reliability. These results confirm that the proposed framework effectively leverages the additional FRIS spatial DoF to enhance atomic-MIMO detection performance.

\begin{figure}[t]
  \begin{center}
    \includegraphics[width=0.8\columnwidth,keepaspectratio]{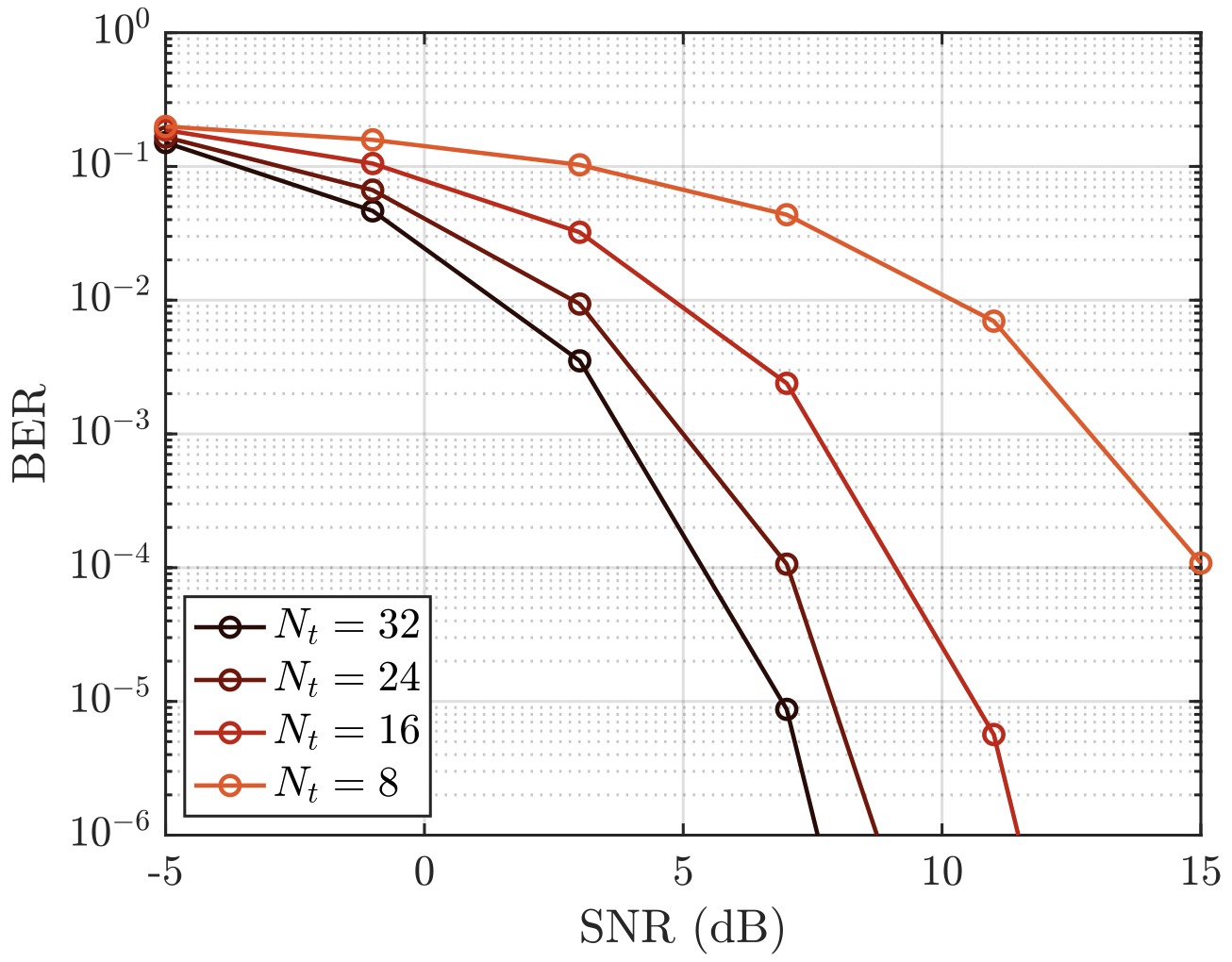}
    \caption{The effect of $\mathrm{SNR}$ on the BER performance under different $N_t$.}
    \label{fig_nt}
  \end{center}
\end{figure}

The effect of $N_t$ on the BER performance is illustrated in Fig.~\ref{fig_nt}. It can be observed that increasing $N_t$ significantly improves the BER performance. In particular, larger transmit antenna arrays achieve the same target BER at substantially lower SNR. For instance, increasing $N_t$ from 16 to 32 provides a noticeable SNR improvement in the moderate-to-low BER region, yielding approximately 4-5~dB gain. This improvement originates from the increased spatial diversity and beamforming gain provided by a larger $N_t$, which improves the reliability of atomic-MIMO detection performance.

\section{Conclusion}
In this paper, we showed that FRIS-assisted atomic-MIMO design requires a fundamentally different perspective from conventional coherent systems. Under magnitude-only heterodyne readout, the optimal propagation environment is not determined by channel strength alone, but depends on its alignment with the receiver's nonlinear measurement structure. By identifying residual quadrature leakage under the strong-reference regime as the dominant structural distortion limiting reliable detection, we established a receiver-induced channel shaping framework. Based on this principle, we formulated a quadrature-leakage-aware joint optimization over transmit beamforming, FRIS port selection, and discrete phase configuration. To solve this challenging mixed design, we developed an efficient AO algorithm that combines a closed-form widely-linear beamformer update, CEM-based combinatorial search, and CD-based phase refinement. The proposed method achieves fast convergence, consistent BER gains, and near-exhaustive performance with significantly reduced complexity. Overall, these results reveal that FRIS is not merely a channel enhancement tool, but a key enabler of receiver-aware propagation control, providing a new design paradigm for robust and practical atomic-MIMO communications in 6G systems.

\bibliographystyle{IEEEtran}
\bibliography{IEEEexample}

\end{document}